%% file: full-final.tex
\documentclass[runningheads,envcountsect,envcountsame]{llncs}
\usepackage{url}
\usepackage{hyperref}
\usepackage[T1]{fontenc}
\usepackage[utf8]{inputenc}
\usepackage{xcolor}
\usepackage{blindtext}
\usepackage{hyperref}
\usepackage{mathcommand}
\usepackage{amssymb,amsfonts,amsmath,mdwlist,verbatim}
\usepackage{cleveref}
\usepackage{marginnote}
\usepackage{paralist}
\usepackage{tikz}
\usetikzlibrary{positioning}
 \usetikzlibrary{calc}

 \spnewtheorem{mainrem}[theorem]{Remark}{\bfseries}{\itshape}

\newcommand{\MSO}{\mathit{MSO}}

\def\nat{\mathbb{N}}

\def\type{\mathit{type}}

\usepackage{mathtools}   

\def\vp{\varphi}

\def\mM{{\mathcal{M}}}

\def\type{{\mathit{type}}}

\newrobustcmd\alex[1]{\textcolor{teal}{A: #1}}

\newrobustcmd\thomas[1]{\textcolor{violet}{T: #1}}

\usepackage{cleveref,thm-restate}

\def\vp{\varphi}

\newcommand{\cC}{\mathcal{C}}

\newcommand{\qr}{{\mathrm{qr}}}
\newcommand{\Formula}[2]{{\mathfrak{Form}^{#1}_{#2}}}
\newcommand{\Hint}[2]{H^{#1}_{#2}}

\newcommand{\Typ}[2]{{\mathrm{type}^{#1}_{#2}}}

\newcommand{\frakB}{\mathfrak{B}}

\usepackage{xparse}

\newcommand{\Nat}{\mathbb{N}}

\newcommand{\frakA}{\mathfrak{A}}
\title{Decidability of MSO Reparameterization over Countable   Chains}

\author{Alexander Rabinovich\orcidID{0000-0002-1460-2358}}

\institute{Tel Aviv University, Israel\\
\email{rabinoa@tauex.tau.ac.il}, \\
\url{https://www.tau.ac.il/~rabinoa/}} 

\begin{document}
\maketitle
\begin{abstract}
Interpretations are a fundamental tool in mathematical logic, 
allowing structures to be encoded within other structures via logical definitions.
We study $\MSO$ \emph{multidimensional point interpretations}, where elements of an
interpreted structure are represented by tuples of elements of the host
structure, and address the problem of simplifying such interpretations
by reducing their representation dimension.

To formalize simplification, we use the notion of
\emph{reparameterization}.     
Our main result shows that, over the class of countable labelled linear
orders, it is decidable whether a given $\MSO$ formula admits a
$d$-dimensional reparameterization.
As a consequence, every interpretation whose domain admits such a
reparameterization is equivalent to a $d$-dimensional point
interpretation.

\end{abstract}

\section{Introduction}

The notion of interpretation was first carefully defined and developed in
the foundational work of Tarski, Mostowski, and Robinson~\cite{TMR53}.
Interpretations are a fundamental tool in mathematical logic, playing a central role in the foundations of mathematics, as well as in the philosophy of science.

Consider Monadic Second-Order\footnote{Let us recall that  monadic second-order logic is an extension of first-order logic by set variables 
 (see   \Cref{sect:mso} for a formal definition).}  ($\MSO$) $d$-dimensional \emph{(point) interpretations} of a structure
\(\mathfrak A\) in a structure \(\mathfrak B\).
Under such an interpretation, each element of $\mathfrak{A}$ is represented within $\mathfrak{B}$ by a $d$-tuple of elements of $\mathfrak{B}$.

A central problem is whether a given interpretation can be simplified, for
instance by reducing the dimension of the representation.
The domain of the interpreted structure is specified by an $\MSO$ formula
\(\varphi( {x_1},\dots,x_m)\) over \(\mathfrak B\).
To formalize this notion of simplification, we use the notion of
\emph{reparameterization}, introduced in~\cite{GallotLhoteNguyen:PolyGrowth:2025}
for $\MSO$-definable relations.

Throughout, $\bar x$ and $\bar y$ denote tuples of first-order variables.

\begin{definition}[Reparameterization]\label{def:reparam}
Let \(\varphi(\overline{x})\) be an $\MSO$ formula and let \(\cC\) be a class of
structures.
A formula \(G(\overline{x},\overline{y})\) is a (functional) \emph{reparameterization} of
\(\varphi\) over \(\cC\) if the following conditions hold:
\begin{description}

  \item[\emph{Same domain:}]
  Over \(\cC\),
$
    \varphi(\overline{x}) \;\equiv\; \exists \overline{y}\, G(\overline{x},\overline{y}).
$
 \item[\emph{Functional:}]
  $G$ defines the graph of a partial function, that is,
  $
    \forall \overline{x}\;\exists^{\le 1}\,\overline{y}\; G(\overline{x},\overline{y}).
$

  \item[\emph{Bounded preimage:}]
  There exists \(N\in\mathbb{N}\) such that, for every \(\mM\in\cC\) and every
  parameter tuple \(\overline{y}\),
  \[
    \bigl|\{\,\overline{x} \mid \mM \models G(\overline{x},\overline{y})\,\}\bigr|
    \;\le\; N.
  \]
\end{description}
We refer to \(\overline{x}\) as the \emph{domain variables} and to
\(\overline{y}\) as the \emph{image variables} of \(G\). The \emph{dimension} of $G$ is the number of variables in $\overline{y}$.
\end{definition}

\noindent
\emph{Intuition.}
A reparameterization replaces the original parameters \(\overline{x}\) by
auxiliary parameters \(\overline{y}\), while preserving the interpreted domain.
The bounded preimage condition ensures that each choice of \(\overline{y}\)
corresponds to only a uniformly bounded number of   \(\overline{x}\).

If \(\varphi(\bar x)\) admits a reparameterization with \(d\) image variables,
then any interpretation whose domain is defined by \(\varphi\) can be reduced
to a \(d\)-dimensional point interpretation
(see \Cref{prop:interpretation-parameter}).

\medskip

Bojańczyk~\cite{Boj23a} studied \emph{polyregular functions},
namely string-to-string functions defined by MSO interpretations.
He proved that it is decidable whether an $\MSO$ formula 
 admits a $d$-dimensional reparameterization over
finite words.
In this work, we generalize this result to arbitrary countable labelled
chains.

The paper is organized as follows.
In \Cref{sect:prel} we recall standard notions concerning chains and
monadic second-order logic, and introduce the basic tools of the
composition method used throughout the paper.
In \Cref{sect:interpretation}  we recall the notion of interpretation and motivate the
reparameterization problem by showing how $d$-dimensional
reparameterizations yield $d$-dimensional interpretations.
\Cref{sect:reparameterization} contains our main result, showing that it is
decidable whether an $\MSO$ formula admits a $d$-dimensional
reparameterization over the class of countable chains.
In \Cref{sect:growth}, following Bojańczyk~\cite{Boj23a}, we study the growth rate of $\MSO$ interpretations.
Extending results of ~\cite{Boj23a,GallotLhoteNguyen:PolyGrowth:2025}, we show that  
  the growth rate is polynomial, with degree equal to the
minimal reparameterization dimension. Due to space limitations, we  provide only proof
sketches. 

\section{Preliminaries} \label{sect:prel}
 
Here  we recall standard notions concerning chains and
monadic second-order logic \cite{Rabin,gurevich1985monadic,thomas1990automata}, and introduce the basic tools of the
composition method \cite{shelah1975monadic,gurevich1985monadic} used throughout the paper.

\subsection{Chains}
A \emph{$k$-chain} is a structure
$
\mathcal M = (M, <, \overline P),
$
where $(M,<)$ is a linear order and $\overline P = (P_1, \dots, P_k)$ is a
$k$-tuple of unary (monadic) predicates on $M$.
When $k$ is clear from the context,  
we simply refer to
$\mathcal M$ as a \emph{chain} or a \emph{labelled chain}.

  An interval $ I$ is a subset  $I\subseteq M$ such that if  $b,c\in I$ and $b<d<c$ then $d\in I$. 
We denote by $\mathcal M\!\upharpoonright I$ the induced substructure on
$I$.

\subsection{Monadic Second-Order Logic}\label{sect:mso}

We work in monadic second-order logic ($\MSO$) over the signature 
$\{<, P_1, \dots, P_k\}$, where each $P_i$ is a unary (monadic) predicate symbol.
First-order variables $x, y, \dots$ range over elements of the domain, and second-order
variables $X, Y, \dots$ range over subsets of the domain.
The atomic formulas are $x < y$, $x = y$, $P_i(x)$, and $X(x)$.
Formulas are built from atomic formulas using Boolean connectives and the quantifiers
$\exists x$, $\forall x$, $\exists X$, and $\forall X$.

\begin{definition}[Definability]\label{def:definability}
Let $\psi(z_1, \dots, z_m)$ be an $\MSO$ formula and let $\mM$ be a structure with domain $D$.
The relation \emph{defined} by $\psi$ in $\mM$ is the set
\[
\psi^{\mM}
:=
\{(a_1, \dots, a_m) \in D^m \mid \mM \models \psi(a_1, \dots, a_m)\}.
\]
A relation $R \subseteq D^m$ is said to be \emph{$\MSO$-definable} in $\mM$ if
there exists an $\MSO$ formula $\psi(z_1, \dots, z_m)$ such that
$R = \psi^{\mM}$.
\end{definition}
We use standard abbreviations. In particular,
$\exists^{\geq N} x\, \varphi(x)$ expresses that there are at least $N$
distinct elements satisfying $\varphi(x)$.

\subsection{Types} 

The \emph{quantifier rank} $\qr(\varphi)$ of an $\MSO$ formula $\varphi$
is the maximal nesting depth of quantifiers.
For $r,k\in\Nat$, let $\Formula{r}{k}$ be the set of $\MSO$ sentences of
quantifier rank at most~$r$ over the signature
$\sigma_k=\{<,P_1,\dots,P_k\}$.

For chains $\frakA$ and $\frakB$, write $\frakA\equiv^r_k\frakB$ if
$\frakA$ and $\frakB$ satisfy the same formulas in $\Formula{r}{k}$.
This is an equivalence relation with finitely many classes.

\begin{lemma}[Hintikka]\label{lem:hintikka-partition}
For every $r,k\in\Nat$ there exists a finite, effectively computable set
$\Hint{r}{k}\subseteq\Formula{r}{k}$ such that:
\begin{enumerate}
  \item The disjunction $\bigvee_{\tau\in\Hint{r}{k}} \tau$ is valid.
  \item If $\tau,\tau'\in\Hint{r}{k}$ are distinct, then
  $\tau\wedge\tau'$ is unsatisfiable.
  \item Moreover, given $\varphi\in \Formula{r}{k}$, one can effectively compute a set
  $H_\varphi \subseteq \Hint{r}{k}$ such that $\varphi$ is equivalent to the disjunction
  of the formulas in $H_\varphi$.
\end{enumerate}
\end{lemma}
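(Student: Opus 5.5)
We argue by induction on $r$, building the Hintikka sentences together with a proof that they are equivalence-class descriptors for $\equiv^r_k$. For every $k$ we define a finite set $\Hint{r}{k}$ of sentences over $\sigma_k$ and show two things. First, each chain satisfies exactly one member of $\Hint{r}{k}$. Second, two chains satisfy the same member of $\Hint{r}{k}$ iff they are $\equiv^r_k$-equivalent. We must work with all $k$ simultaneously, because set quantifiers add a predicate. It is convenient to treat first-order variables as singleton set variables, so that a formula of rank $r$ with one free set variable becomes a sentence over $\sigma_{k+1}$ with the new predicate $P_{k+1}$. First-order quantification is then the special case in which $P_{k+1}$ is a singleton, and this is expressible at the cost of a constant increase in rank; to keep the exact bound $\qr\le r$, we instead carry both kinds of quantifiers in the construction.

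For $r=0$, the members of $\Hint{0}{k}$ are the complete consistent conjunctions of atomic sentences and their negations. There are no atomic sentences without variables except $\top$ and $\bot$, so in the formulation with free variables these are the atomic types of the tuple. For the step from $r$ to $r+1$, take a subset $S\subseteq\Hint{r}{k+1}$ and put
\[
\tau_S \;=\; \bigwedge_{\sigma\in S}\exists X\,\sigma[P_{k+1}:=X]\;\wedge\;\forall X\bigvee_{\sigma\in S}\sigma[P_{k+1}:=X].
\]
We add the analogous clauses for a first-order variable $x$, using the types with one extra free element variable. $\Hint{r+1}{k}$ consists of the satisfiable sentences of this form. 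Properties 1 and 2 follow directly. Every chain $\frakA$ determines the set $S_\frakA$ of types realized by its expansions, and $\frakA$ satisfies $\tau_S$ iff $S=S_\frakA$. Distinct $S$ give contradictory sentences, since the two sets differ in some $\sigma$, and one sentence asserts the corresponding $\exists X$ clause while the other asserts the $\forall X$ disjunction that excludes $\sigma$. The rank of $\tau_S$ is $r+1$, and finiteness follows by induction.

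The key step is completeness: if $\frakA$ and $\frakB$ satisfy the same $\tau\in\Hint{r+1}{k}$, then $\frakA\equiv^{r+1}_k\frakB$. We prove by induction on formulas of rank at most $r+1$, possibly with free variables, that their truth depends only on the type. Boolean combinations are immediate. For $\exists X\,\psi$ with $\qr(\psi)\le r$, the induction hypothesis says that $\psi$'s truth in an expansion depends only on its $r$-type $\sigma$ in $\Hint{r}{k+1}$. Hence $\exists X\psi$ holds iff some $\sigma\in S$ entails $\psi$, and this is determined by $S$.

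Item 3 is also where effectiveness enters, and it is the main obstacle. Effectiveness of the set $\Hint{r}{k}$ itself is only needed up to a decidable satisfiability test; alternatively, we can keep all syntactic $\tau_S$ and allow some of them to be unsatisfiable, which is harmless for items 1 and 2. For item 3, given $\varphi$, we compute $H_\varphi$ recursively on $\varphi$ rather than by testing entailment semantically. For atomic $\varphi$ the set is read off at rank $0$. A negation takes the complement and a conjunction takes the intersection, after first lifting to the maximal rank. Lifting is computable because each rank-$(r+1)$ type syntactically determines its rank-$r$ restriction: the rank-$r$ type is the one contained in $S$ for the expansion by $X=\emptyset$, computed recursively. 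For $\exists X\psi$, the set $H$ consists of those $\tau_S$ with $S\cap H_\psi\neq\emptyset$. Every step is a finite syntactic computation, so $H_\varphi$ is computable. By completeness, $\varphi$ is then equivalent to $\bigvee H_\varphi$.
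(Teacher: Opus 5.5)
The paper gives no proof of this lemma. It is recalled as a standard fact of the composition method, so the only benchmark is the textbook argument, and that is what you give. Your Hintikka sentences $\tau_S$ are built from the sets of rank-$r$ types realized by one-step expansions. Exhaustiveness and mutual exclusivity follow by induction on $r$, simultaneously over all signatures $\sigma_k$ and numbers of free element variables. Item~3 comes from a syntactic recursion on $\varphi$. This is correct, but two places need tightening.

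\emph{Effectiveness.} Of your two options, only the second is available in the generality of the statement. The paper's chains are arbitrary linear orders (the lemma is even phrased for all $\sigma_k$-structures), and there MSO satisfiability is not decidable. So you must keep all syntactic $\tau_S$, including unsatisfiable ones such as $\tau_\emptyset$; as you observe, items 1--3 survive this. Over countable chains Rabin's theorem would let you discard them, which is what the paper tacitly relies on later when it speaks of satisfiable $q$-types.

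\emph{Lifting.} As written, this step is misstated. The member of $S$ realized by the expansion $X=\emptyset$ is a rank-$r$ type over $\sigma_{k+1}$, not the rank-$r$ type of $\frakA$ over $\sigma_k$. What you need is a computable reduct map $\Hint{r}{k+1}\to\Hint{r}{k}$ that forgets $P_{k+1}$. You can define it by the same recursion: apply the reduct, with the appropriate renaming of the later predicates, to every member of the sets in the two clauses, and at rank $0$ delete the atoms mentioning $P_{k+1}$. The rank-$r$ restriction of a satisfiable $\tau_S$ is then the reduct of \emph{any} member of $S$. You therefore never need to recognise syntactically which member comes from $X=\emptyset$. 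Since unsatisfiable $\tau_S$ are kept, this map only has to be correct on satisfiable ones. It can be defined arbitrarily elsewhere, e.g.\ when $S=\emptyset$, without affecting $\varphi\equiv\bigvee H_\varphi$.

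Alternatively, you can avoid lifting altogether. Compute $H_\varphi$ for an arbitrary target rank $r\ge\qr(\varphi)$, passing to target $r-1$ under each quantifier. Then include the atomic type of the free variables as an explicit conjunct of every Hintikka formula, so atomic subformulas can be read off at any rank.
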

By \textup{(1)} and \textup{(2)}, every $\sigma_k$-structure $\frakA$ satisfies a unique
sentence $\tau\in\Hint{r}{k}$. We denote this sentence by $\Typ{r}{k}(\frakA)$ and call it the
\emph{$(r,k)$-type} of~$\frakA$.
Moreover, $\Typ{r}{k}(\frakA)$ effectively determines which sentences in
$\Formula{r}{k}$ are satisfied in~$\frakA$.

When $k$ is clear from the context, we write $\Typ{r}{}(\frakA)$ and call it the
\emph{$r$-type} of $\frakA$. For an interval $I$ of a chain $\frakA$, we write
$\Typ{r}{}(I)$ for the $r$-type of the induced substructure
$\frakA\!\upharpoonright I$.

\subsection{Concatenation of Chains and Sum of Types}
Given $\ell$-chains $\frakA$ and $\frakB$, their \emph{sum} (or
\emph{concatenation})
$\frakA+\frakB$ is obtained by placing $\frakA$ before
$\frakB$.
More generally, for a finite sequence of $\ell$-chains
$\frak A_1,\dots,\frak A_n$, we write
$
\sum_{i=1}^{n} \frak A_i
$
for their sum ~\cite{shelah1975monadic}.
The next proposition implies that the $r$-type of the concatenation of two chains  is determined by their $r$-types.
\begin{proposition}[Sum of types]\label{prop:congr}
For every $r$ there is a computable operation $+$ on $r$-types such that
\[
\Typ{r}{}(\frakA+\frakB)
=
\Typ{r}{}(\frakA)+\Typ{r}{}(\frakB).
\]
\end{proposition}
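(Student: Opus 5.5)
The plan is the standard Ehrenfeucht–Fraïssé argument for the composition theorem. First, by \Cref{lem:hintikka-partition}, $r$-types correspond exactly to $\equiv^r_k$-classes: $\Typ{r}{}(\frakA)=\Typ{r}{}(\frakA')$ iff $\frakA\equiv^r_k\frakA'$. So the equality in the statement reduces to a congruence property:
\[
\frakA\equiv^r_k\frakA' \ \text{and}\ \frakB\equiv^r_k\frakB' \quad\text{imply}\quad \frakA+\frakB\equiv^r_k\frakA'+\frakB'.
\]
Once we have this, we define $\tau+\tau'$ as $\Typ{r}{}(\frakA+\frakB)$ for any $\frakA\models\tau$ and $\frakB\models\tau'$. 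The congruence makes this well defined.

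To prove the congruence, we use the MSO Ehrenfeucht–Fraïssé game with $r$ rounds. In each round, Spoiler picks either an element or a subset in one structure, and Duplicator answers in the other. Duplicator wins if the final configuration is a partial isomorphism for $<$, the $P_i$, and the membership relations. The standard fact we need is: Duplicator wins the $r$-round game on two structures iff they satisfy the same MSO sentences of quantifier rank at most $r$.

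Given winning strategies $S_A$ on $(\frakA,\frakA')$ and $S_B$ on $(\frakB,\frakB')$, Duplicator plays on the sums componentwise.
\begin{itemize}
\item A set move $X$ is split into $X\cap A$ and $X\cap B$. Each part is answered by its own strategy, and the answer is the union of the two responses.
\item A point move is answered in whichever component contains it, while the other component's game is left unchanged for that round.
\end{itemize}
One can check that $<$ is preserved, because every element of $A$ precedes every element of $B$ in both sums. The unary predicates and set memberships are preserved within each component. Hence Duplicator wins on the sums. To avoid the mismatch in round counts caused by point moves, we let Duplicator make a dummy move in the other component; there are at most $r$ rounds in each component in any case.

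For computability, we enumerate the finitely many types in $\Hint{r}{k}$. For each $\tau$, satisfiability is decidable, and we search for a witness of $\tau$. The simplest way is to note that every $r$-type over chains is realised by a finite or a simple countable chain. A cleaner route avoids this: the game argument yields an explicit syntactic translation. Each $\varphi\in\Formula{r}{k}$ about $\frakA+\frakB$ can be rewritten, by induction on $\varphi$, into a Boolean combination of rank-$r$ sentences about $\frakA$ and about $\frakB$ (Feferman–Vaught/Shelah). Reading off which Hintikka sentences appear then computes the table for $+$.

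I expect the main obstacle to be this effectiveness step rather than the congruence itself. The game argument is routine. Making the operation computable, however, requires either the inductive syntactic reduction or decidability of the MSO theory of (countable) chains. I would carry out the syntactic induction, relativising quantifiers to the two halves. For this, a fresh set variable marks the cut $A$, and the translation handles $\exists X$ by splitting $X$ into its two halves.
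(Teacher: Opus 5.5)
The paper gives no proof of this proposition and only cites Shelah's composition method, and your plan is the standard argument behind that citation: the congruence via the $r$-round MSO Ehrenfeucht--Fra\"{\i}ss\'e game, and effectiveness via the syntactic Feferman--Vaught/Shelah induction, which you must run on formulas with free point and set variables (recording which summand each free point variable lies in), since the statement for sentences alone is not inductive. You are right to drop the alternative computability route: the claim that every $r$-type is realised by a finite or simple countable chain fails for arbitrary chains (for large $r$, a sentence expressing ``dense, Dedekind-complete, at least two elements'' has only uncountable models), and even over countable chains that route rests on deep results (Rabin, Shelah), whereas the syntactic induction computes the table for $+$ without any satisfiability test.
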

The following lemma is a standard consequence of composition methods for $\MSO$ over linear orders~\cite{shelah1975monadic}; in particular, it follows from \Cref{prop:congr}.
\begin{lemma}[Local normal form]\label{lem:local}
Let
\[
\varphi(x_1,\dots,x_m)\;\wedge\;\bigwedge_{i<m} x_i<x_{i+1}
\]
be an $\MSO$ formula of quantifier rank~$q$.
For every $r \ge q+2$, the formula $\varphi$ is equivalent to a (finite) positive 
Boolean combination of formulas asserting the $r$-types of the induced
subchains on the intervals
\[
(-\infty,x_1),\quad [x_i,x_{i+1}),\quad \text{and}\quad [x_m,\infty).
\]
Moreover, this Boolean combination is computable from~$\varphi$.
\end{lemma}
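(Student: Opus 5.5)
The plan is to prove the local normal form by induction on $m$, using the standard Ehrenfeucht--Fraïssé/composition argument. The same argument shows that the positive Boolean combination can be taken to be a disjunction, over suitable tuples of types, of conjunctions of type assertions.

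Fix $r\ge q+2$ and a $k$-chain $\frakA$ with elements $a_1<\dots<a_m$. Form the $(k+m)$-chain $\frakA_{\bar a}$ by adding fresh unary predicates $Q_i=\{a_i\}$. Replacing each free variable $x_i$ by a quantified first-order variable relativized to $Q_i$, the formula $\varphi(\bar x)$ becomes a sentence $\hat\varphi$ over $\sigma_{k+m}$ whose quantifier rank is at most $q+2$; the extra quantifiers express "the unique element of $Q_i$". Hence the truth of $\varphi(\bar a)$ in $\frakA$ is determined by $\Typ{r}{k+m}(\frakA_{\bar a})$ whenever $r\ge q+2$, and by \Cref{lem:hintikka-partition}(3) we can compute the set $H_{\hat\varphi}$ of such types that satisfy $\hat\varphi$.

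Next decompose. $\frakA_{\bar a}$ is the sum of its restrictions to the intervals $I_0=(-\infty,a_1)$, $I_i=[a_i,a_{i+1})$ and $I_m=[a_m,\infty)$. On $I_0$ all the $Q_j$ are empty. On $I_i$ with $i\ge1$, the predicate $Q_i$ is the singleton containing the minimum of the interval, and every other $Q_j$ is empty. So the $(r,k+m)$-type of $\frakA_{\bar a}\!\upharpoonright I_i$ is determined by the $(r,k)$-type of $\frakA\!\upharpoonright I_i$: the rank-$r$ sentences over the expanded signature translate into rank-$r$ sentences over $\sigma_k$. This works by replacing $Q_j(z)$ with false for $j\neq i$, and $Q_i(z)$ with "$z$ is the least element", which is expressible inside a single quantifier block (it costs at most one extra quantifier, still covered by the slack in $r\ge q+2$). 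This is the only slightly delicate step. If the rank budget becomes tight, one can instead observe that the type of $I_i$ determines the type of its minimum point $\{a_i\}$ plus the type of the remainder $(a_i,a_{i+1})$, or simply enlarge $r$ and coarsen afterwards. Either way there is a computable map $\tau\mapsto \tau^{(i)}$ from $(r,k)$-types to $(r,k+m)$-types.

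By \Cref{prop:congr}, applied iteratively,
\[
\Typ{r}{k+m}(\frakA_{\bar a})=\tau_0^{(0)}+\tau_1^{(1)}+\dots+\tau_m^{(m)},
\]
where $\tau_i=\Typ{r}{k}(\frakA\!\upharpoonright I_i)$. Since the sets of types are finite and $+$ is computable, we enumerate all tuples $(\tau_0,\dots,\tau_m)$ and keep those whose sum lies in $H_{\hat\varphi}$. This gives
\[
\varphi(\bar x)\wedge\bigwedge_{i<m}x_i<x_{i+1}\;\equiv\;\bigwedge_{i<m}x_i<x_{i+1}\wedge\bigvee_{(\tau_0,\dots,\tau_m)\text{ good}}\ \bigwedge_{i=0}^{m}\text{``}\Typ{r}{}(I_i)=\tau_i\text{''},
\]
which is a positive Boolean combination of the required form, and it is computable from $\varphi$.

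Finally, the assertion "the type of the interval $[x_i,x_{i+1})$ is $\tau$" is itself $\MSO$-expressible by relativizing the Hintikka sentence $\tau$ to that interval, so the equivalence is between $\MSO$ formulas. The main obstacle is the bookkeeping for the marker predicates, namely checking that quantifier rank $q+2$ suffices; everything else follows directly from \Cref{prop:congr} and \Cref{lem:hintikka-partition}.
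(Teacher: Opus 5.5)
Your strategy (mark the $a_i$ by fresh predicates, turn $\varphi$ into a sentence $\hat\varphi$, and apply \Cref{prop:congr} to the decomposition into intervals) is the standard composition argument the paper relies on; the paper itself only cites \Cref{prop:congr}. The gap is in the quantifier-rank bookkeeping, which as written does not reach the threshold $r\ge q+2$.

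\textbf{The false step.} You claim that the $(r,k+m)$-type of $\frakA_{\bar a}\!\upharpoonright I_i$ is determined by the $(r,k)$-type of $\frakA\!\upharpoonright I_i$. This is false. Take all labels empty: the chains of length $3$ and $4$ satisfy the same $\MSO$ sentences of rank $2$. After marking the minimum by $Q$, however, the rank-$2$ sentence $\exists z\,(\neg Q(z)\wedge\exists w\,(w<z\wedge\neg Q(w))\wedge\exists w\,(z<w))$ holds only in the longer one. Translating $Q_i(z)$ as ``$z$ is least'' costs one quantifier. So you only get a map from $(r+1,k)$-types to $(r,k+m)$-types. There is no slack left to absorb this, because you already used the whole margin when you bounded $\qr(\hat\varphi)$ by $q+2$ and composed at rank $r$.

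\textbf{The fallbacks do not help.} Removing the minima in the same example gives chains of length $3$ and $2$, which differ at rank $2$. So the $r$-type of $\{a_i\}+X$ does not determine the $r$-type of $X$. Proving the result at a larger rank and ``coarsening'' goes the wrong way: a combination of $(q+3)$-types need not be expressible by $(q+2)$-types. As it stands, your argument proves the lemma only for $r\ge q+3$. That weaker bound would serve the paper's later uses after adjusting constants, but it is not what the lemma states.

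\textbf{The repair.} Use the ordering conjunct in the translation. Under $x_1<\dots<x_m$, every atom relating two free variables is a truth constant, and $x_i=z$ becomes $Q_i(z)$. The remaining atoms containing a free variable ($x_i<z$, $z<x_i$, $P_j(x_i)$, $X(x_i)$) become rank-$1$ formulas such as $\exists w\,(Q_i(w)\wedge w<z)$. Hence $\qr(\hat\varphi)\le q+1$. Compose at rank $q+1$ over $\sigma_{k+m}$, then pay the one quantifier for ``least element''; this yields exactly the $(q+2)$-types of the unmarked intervals. For $r>q+2$, replace each $(q+2)$-type assertion by the disjunction of the $r$-types implying it (\Cref{lem:hintikka-partition}(3)); this keeps the combination positive.

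Alternatively, mark $a_i$ by the cut $[a_i,\infty)$ instead of $\{a_i\}$. Each marker is then empty or total on every interval, so the pieces cost nothing extra, while $\hat\varphi$ still has rank at most $q+2$.
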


\section{Interpretations and Reparameterizations}\label{sect:interpretation}

In this section we recall the notion of interpretation and show that if
the universe formulas admit $d$-dimensional reparameterizations, then
the interpretation is equivalent to a $d$-dimensional interpretation.
These results are not needed for the computability of
$d$-dimensional reparameterizations and serve only to motivate the
reparameterization problem.
Throughout this section we consider $\MSO$ over arbitrary relational
signatures; its syntax and semantics are defined in the natural way.
Readers interested primarily in the decidability results may skim this section on first reading.

\begin{definition}[MSO interpretation]
An \emph{MSO (many-dimensional point) interpretation} is a function
\[
f \colon \mathcal C \to \mathcal D
\]
between two classes of structures, where all structures in $\mathcal C$
(respectively, $\mathcal D$) are over a fixed relational signature.
The interpretation is defined as follows.

All formulas mentioned below are $\MSO$ formulas over the  signature of
the input class $\mathcal C$. All free variables are first-order (element)
variables, but the formulas may quantify over sets.
\begin{enumerate}
\item \textbf{Components.}
There is a finite set $Q$ whose elements are called the
\emph{components} of the interpretation.
Each component $q\in Q$ is associated with a \emph{dimension}
$\dim(q)\in\{0,1,2,\dots\}$.

\item \textbf{Universe formulas.}
For each component $q\in Q$, there is an associated \emph{universe
formula} $\varphi_q(\bar x)$, where the tuple $\bar x$ has length
$\dim(q)$.
These formulas define the universe/domain of the output structure as follows.
For an input structure $\frakA\in\mathcal C$ with universe $A$, the universe of
$f(\frakA)$ is the disjoint union
\[
\coprod_{q\in Q}
\bigl\{
\bar a \in A^{\dim(q)} \mid
\frakA \models \varphi_q(\bar a)
\bigr\}.
\]
\item \textbf{Relation interpretations.}
Let $R$ be a relation symbol in the vocabulary of the output class
$\mathcal D$ of arity $\ell$.
For every choice of components $q_1,\dots,q_\ell\in Q$, there is an
$\MSO$ formula
\[
\varphi^{R}_{q_1,\dots,q_\ell}
(\bar x_1,\dots,\bar x_\ell),
\]
where each tuple $\bar x_i$ has length $\dim(q_i)$.
For every input structure $\frakA\in\mathcal C$ and tuples
$\bar a_i\in A^{\dim(q_i)}$, we have
\[
\frakA \models
\varphi^{R}_{q_1,\dots,q_\ell}(\bar a_1,\dots,\bar a_\ell)
\]
if and only if, in the structure $f(\frakA)$, the relation $R$
holds on the $\ell$-tuple whose $i$-th coordinate is $\bar a_i$ from
component $q_i$.
\end{enumerate}
\end{definition}
\begin{remark}\label{rem:int-injective}
The interpretations defined above are usually called
\emph{injective interpretations}:
every element of the output structure
$\frakB := f(\frakA)$ is represented by a unique tuple of elements of
$\frakA$.

In more general notions of interpretation, one allows a binary relation
$E$ on the domain of $f(\frakA)$, defined by $\MSO$ formulas over
$\frakA$.
For every structure $\frakB := f(\frakA)$, the relation $E$ is required
to be a \emph{congruence}, that is, an equivalence relation such that all
relations of $\frakB$ are invariant under~$E$.
In this case, the elements of the output structure correspond to the
$E$-equivalence classes.
\end{remark}
An interpretation is said to be \emph{$m$-dimensional} if all its universe
 formulas use at most $m$ free variables.
Two interpretations
\[
f,h \colon \mathcal C \to \mathcal D
\]
are \emph{equivalent} if for every $\frakA\in\mathcal C$ the structures
$f(\frakA)$ and $h(\frakA)$ are isomorphic.

\begin{proposition}\label{prop:interpretation-parameter}
Let $\mathcal C$ be a class of structures equipped with an
$\MSO$-definable linear order.
Assume that for each component $q$ of an interpretation $f$ over
$\mathcal C$, the corresponding universe formula $\varphi_q$ admits a
reparameterization of dimension $d$.
Then $f$ is equivalent to a $d$-dimensional interpretation $h$.
Moreover, there exists an $\MSO$ formula that uniformly defines, in
every $\frakA\in\mathcal C$, an isomorphism between
$f(\frakA)$ and $h(\frakA)$.
\end{proposition}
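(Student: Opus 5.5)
Fix a component $q$ with universe formula $\varphi_q(\bar x)$ and a reparameterization $G_q(\bar x,\bar y)$ of dimension $d$, with preimage bound $N_q$. The idea is to represent an element $\bar a$ of component $q$ of $f(\frakA)$ by its image $\bar b$ under $G_q$, together with a finite ``index'' recording which of the at most $N_q$ preimages of $\bar b$ it is. The index is encoded by splitting $q$ into new components $(q,i)$ for $1\le i\le N_q$, all of dimension $d$ (padding $\bar y$ with a dummy coordinate if $|\bar y|<d$, or noting that a smaller dimension is allowed). To make ``the $i$-th preimage'' definable, I use the $\MSO$-definable linear order $\prec$ on $\mathcal C$, extended lexicographically to $\dim(q)$-tuples; lexicographic order on tuples is first-order definable from $\prec$.

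Concretely, I first define $\mathrm{Pre}^q_i(\bar x,\bar y)$: ``$G_q(\bar x,\bar y)$ and exactly $i-1$ tuples $\bar x'$ with $G_q(\bar x',\bar y)$ are lexicographically below $\bar x$.'' This is $\MSO$ because counting up to the constant $N_q$ needs only first-order quantifiers $\exists^{\ge i-1}$. The new universe formula for $(q,i)$ is $\psi_{q,i}(\bar y):=\exists\bar x\,\mathrm{Pre}^q_i(\bar x,\bar y)$. Then I define the bijection: the formula $\theta_{q,i}(\bar x,\bar y):=\mathrm{Pre}^q_i(\bar x,\bar y)$ links $\bar a$ in component $q$ of $f(\frakA)$ to $\bar b$ in component $(q,i)$ of $h(\frakA)$. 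I check that it is a bijection between $\coprod_q\varphi_q^{\frakA}$ and $\coprod_{q,i}\psi_{q,i}^{\frakA}$. Totality uses the same-domain condition: $\varphi_q(\bar a)$ gives some $\bar b$ with $G_q(\bar a,\bar b)$. Functionality uses the functional condition on $G_q$, and since the preimage set of $\bar b$ is finite (size $\le N_q$), $\bar a$ has a well-defined rank $i$. Injectivity holds because $(\bar b,i)$ determines $\bar a$ as the unique element of rank $i$, and surjectivity follows directly from the definition of $\psi_{q,i}$.

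Finally, I define the relations of $h$ by pulling back along this bijection: for $R$ of arity $\ell$ and components $(q_1,i_1),\dots,(q_\ell,i_\ell)$, set
\[
\varphi^{R}_{(q_1,i_1),\dots,(q_\ell,i_\ell)}(\bar y_1,\dots,\bar y_\ell):=\exists\bar x_1\cdots\exists\bar x_\ell\Bigl(\bigwedge_{j}\mathrm{Pre}^{q_j}_{i_j}(\bar x_j,\bar y_j)\wedge\varphi^R_{q_1,\dots,q_\ell}(\bar x_1,\dots,\bar x_\ell)\Bigr).
\]
The bijection then becomes an isomorphism $f(\frakA)\cong h(\frakA)$, and it is uniformly defined by the formulas $\theta_{q,i}$, which gives the ``moreover'' part. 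The only delicate step is ranking preimages definably. This is exactly where the definable linear order and the bounded-preimage condition are needed: the order lets us pick a canonical enumeration, and the bound keeps the number of components finite and the counting first-order. I do not expect any real obstacle beyond bookkeeping about dimensions and components of dimension $0$, which are left unchanged.
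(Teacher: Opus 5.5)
Your proposal is correct and follows essentially the same construction as the paper. The paper also uses components $(q,i)$ with $1\le i\le N_q$, universe formulas selecting the $i$-th lexicographic preimage under $G_q$, and relation formulas pulled back through these preimages; its definable isomorphism is given by the same ranking formula, and you are slightly more careful than the paper in explicitly invoking the functional condition on $G_q$ (not just bounded preimage) for the bijectivity check. Your padding with a dummy coordinate would need that coordinate constrained (e.g.\ set equal to $y_1$), but it is never needed, since $|\bar y|=d$ and ``$d$-dimensional'' only requires at most $d$ free variables.
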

\begin{proof} 
 For simplicity, we present the proof only for injective interpretations
(see \Cref{rem:int-injective}).
The argument easily extends to the general case, including
interpretations that use parameters~\cite{Friedman1I}.
 
Let $f\colon\mathcal C\to\mathcal D$ be the given MSO interpretation.
For each component $q$ of $f$, let
\[
G_q(\bar x,\bar y)
\]
be an $\MSO$-definable reparameterization of the corresponding domain
formula $\varphi_q(\bar x)$, where $|\bar y|\le d$.
Assume that there exists a constant $N_q$ such that for every input
structure $\frakA\in\mathcal C$ and every tuple $\bar y$, the set
\[
\{\bar x \mid \frakA\models  G_q(\bar x,\bar y)\}
\]
has size at most $N_q$.

Note that
\[
G_q(\bar x,\bar y)\rightarrow \varphi_q(\bar x)
\]
follows from the domain clause of the definition of reparameterization.
Hence,
\[
\{\bar x \mid \frakA\models  \varphi_q(\bar x)\wedge G_q(\bar x,\bar y)\}
=
\{\bar x \mid \frakA\models  G_q(\bar x,\bar y)\}.
\]
We construct an equivalent $d$-dimensional MSO interpretation
$h\colon\mathcal C\to\mathcal D$ as follows.

\smallskip
\noindent\textbf{Components and domain formulas.}
The components of $h$ are pairs $(q,i)$, where $q$ is a component of $f$
and $1\le i\le N_q$.
For each such component $(q,i)$, the domain formula
$\psi_{q,i}(\bar y)$ is defined by
\[
\psi_{q,i}(\bar y)
\;:=\;
\exists \bar x\;
\Bigl(
G_q(\bar x,\bar y)
\;\wedge\;
\bar x \text{ is the $i$-th tuple in the lexicographic order of this set}
\Bigr).
\]
Since lexicographic order on tuples is $\MSO$-definable and $|\bar y|\le
d$, each $\psi_{q,i}$ is an $\MSO$ formula with at most $d$ free
variables.

\smallskip
\noindent\textbf{Relation formulas.}
Let $R$ be a relation symbol of the output vocabulary of arity $\ell$.
For every tuple of components $(q_1,i_1),\dots,(q_\ell,i_\ell)$, we define
the corresponding relation formula
\[
\psi^{R}_{(q_1,i_1),\dots,(q_\ell,i_\ell)}(\bar y_1,\dots,\bar y_\ell)
\]
to assert that there exist tuples $\bar x_1,\dots,\bar x_\ell$ such that
for each $j\le\ell$,
$\bar x_j$ is the $i_j$-th tuple satisfying
$  G_{q_j}(\bar x_j,\bar y_j)$, and
\[
\varphi^{R}_{q_1,\dots,q_\ell}(\bar x_1,\dots,\bar x_\ell)
\]
holds.
By the boundedness assumption on the reparameterizations, for each
$\bar y_j$ there is at most one such tuple $\bar x_j$, so the relation is
well defined.

\smallskip
\noindent\textbf{Equivalence of interpretations.}
For every input structure $\frakA\in\mathcal C$, define a mapping
\[
\pi\colon h(\frakA)\to f(\frakA)
\]
by sending an element $\bar y$ of component $(q,i)$ to the unique tuple
$\bar x$ of component $q$ such that
\[
\frakA\models  G_q(\bar x,\bar y)
\]
and $\bar x$ is the $i$-th tuple in the corresponding lexicographic
ordering.
The boundedness assumption ensures that $\pi$ is well defined and
bijective.

By construction, $\pi$ preserves all relations, since a relation holds
between tuples of $\bar y$-elements in $h(\frakA)$ if and only if the
corresponding tuples of $\bar x$-elements satisfy the defining relation
formula in $f(\frakA)$.
It is easy to verify that $\pi$ is $\MSO$-definable in $\frakA$.
Hence $\pi$ is an $\MSO$-definable isomorphism between $h(\frakA)$ and
$f(\frakA)$.

Therefore, the interpretation $h$ is $d$-dimensional and is equivalent to 
the original interpretation~$f$.\qed
\end{proof}

\section{Reparameterization}\label{sect:reparametrization} \label{sect:reparameterization} 

\noindent
The following theorem generalizes Lemma~II.7 of Bojańczyk~\cite{Boj23a}
from finite words to arbitrary countable chains.

\begin{theorem}[Decidability   of $m$-dimensional reparameterizations]
\label{thm:m-dim-reparam}
Let $\cC$ be an $\MSO$-definable class of countable chains, and let
$\varphi(\bar x)$ be an $\MSO$ formula with first-order free variables
$\bar x=(x_1,\dots,x_n)$.
There is an algorithm which, given $m\in\mathbb{N}$, decides whether there exists
an $\MSO$ formula $G(\bar x,\bar y)$ with $\bar y$ an $m$-tuple of first-order variables,
and a number $N\in\mathbb{N}$ such that, over $\cC$,
\[
  \varphi(\bar x)\ \equiv\ \exists \bar y\, G(\bar x,\bar y)
  \qquad\text{and}\qquad
  \bigl|\{\,\bar x \mid \mM\models G(\bar x,\bar y)\,\}\bigr|\ \le\ N
\]
for all $\mM\in\cC$ and all tuples $\bar y$ of elements of $\mM$.
Moreover, whenever the answer is positive,   an $m$-dimensional functional
reparameterization can be effectively constructed from $\varphi$ and an $\MSO$
definition of $\cC$.
\end{theorem}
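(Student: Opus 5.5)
The plan is to reduce the existence of an arbitrary reparameterization to a finite, effectively checkable family of \emph{canonical} candidates, namely those whose image tuple is a subtuple of $\bar x$. Then existence becomes a single $\MSO$ validity question over $\cC$. That question is decidable because the $\MSO$ theory of countable chains is decidable (Rabin, Shelah) and $\cC$ is $\MSO$-definable.

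\textbf{Step 1 (normal form).} Split $\varphi$ into finitely many disjuncts according to the order and equality pattern of $\bar x$. For each pattern, \Cref{lem:local} rewrites $\varphi$ as a positive Boolean combination of statements about $r$-types of the intervals cut out by the $x_i$, for some computable $r$. After this we may assume the coordinates are strictly increasing and distinct.

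\textbf{Step 2 (uniform bounds via pumping).} The key lemma is the following. For every $\MSO$ formula $\psi(\bar x',\bar z)$ there is a computable $K_\psi$ with this property: if some $\mM\in\cC$ and $\bar c$ satisfy $|\{\bar x'\mid \mM\models\psi(\bar x',\bar c)\}|>K_\psi$, then these sets are unbounded over $\cC$.

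The proof I have in mind goes as follows. Cut $\mM$ at $\bar c$, at the witnesses, and at the points defining $\cC$. Among more than $K_\psi$ witnesses, a Ramsey argument on the $r$-types of the intervals between consecutive witnesses, combined with \Cref{prop:congr}, finds a segment $u\,v$ of type $e$ whose idempotent part $v$ can be repeated. Repeating $v$ does not change the types seen by $\psi$, by \Cref{prop:congr}, or membership in $\cC$. It does, however, produce arbitrarily many witnesses.

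Consequently ``the set is $N$-bounded for some uniform $N$'' is equivalent over $\cC$ to ``every such set has at most $K_\psi$ elements''. The latter is expressible by $\exists^{\le K_\psi}$.

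\textbf{Step 3 (the canonical test).} For $S\subseteq\{1,\dots,n\}$ with $|S|=m$, let
\[
\beta_S(\bar x):=\exists^{\le K}\bar x'\,\bigl(\varphi(\bar x')\wedge \bar x'_S=\bar x_S\bigr),
\]
with $K$ taken from Step 2. The algorithm answers yes iff $\cC\models\forall\bar x\,(\varphi(\bar x)\to\bigvee_S\beta_S(\bar x))$, which is decidable.

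In the positive case, fix an order on the index sets $S$ and put
\[
G(\bar x,\bar y):=\varphi(\bar x)\wedge \bar y=\bar x_S
\]
for the least $S$ with $\beta_S(\bar x)$. This $G$ is functional and has the same domain as $\varphi$. Its preimages have size at most $K\binom nm$, since a given $\bar y$ can arise from each index set $S$ and, for each $S$, from at most $K$ tuples.

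\textbf{Step 4 (necessity, the main obstacle).} Suppose some $G(\bar x,\bar y)$ with $|\bar y|=m$ and bound $N$ exists, and suppose some $\bar x$ with $\mM\models\varphi(\bar x)$ fails every $\beta_S$. Pick $\bar y$ with $G(\bar x,\bar y)$. The $y_j$ lie in at most $m$ of the intervals determined by $\bar x$.

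Let $S$ be the set of indices $i$ such that $x_i$ is ``anchored'' to some $y_j$. By this I mean that, after all the $y_j$ are inserted as cut points, $x_i$ is an endpoint of a sub-interval containing a $y_j$; I will need to check that a suitable choice of $S$ has at most $m$ elements, padding it to size exactly $m$ if smaller.

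Since $\neg\beta_S(\bar x)$ holds, by Step 2 we can pump, inside intervals free of $\bar y$, to obtain a structure in $\cC$ with more than $N$ tuples $\bar x'$ agreeing with $\bar x$ on $S$. For these we need $G(\bar x',\bar y)$ to keep holding. This requires the pumping to be done in the refined decomposition cut by $\bar x\cup\bar y$, using the local normal form of $G$. Then the types seen by $G$ are preserved, contradicting the bound $N$.

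The delicate point is making the pumping in Step 2 simultaneous and compatible with the extra cut points $\bar y$. I would handle it by applying Ramsey to types relative to the joint tuple $\bar x\bar y$, which is still of bounded length.
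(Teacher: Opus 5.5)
\textbf{There is a genuine gap, and it is in Step 4.} The necessity half of your canonical test is false, so the procedure of Step 3 can answer \emph{no} even though an $m$-dimensional reparameterization exists. Your $\beta_S$ requires, for each $\bar x$, one index set $S$ such that the fibre of the \emph{whole} formula $\varphi$ over $\bar x_S$ is small. A reparameterization, however, may choose $S$ by a definable case distinction and only needs small fibres within each case.

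Here is a counterexample. Work over all countable chains with one predicate $P_a$, and call positions outside $P_a$ the $b$-positions. Let
\[
\varphi(x_1,x_2):=x_1<x_2\wedge\bigl((x_1=\min\wedge P_a(x_2))\vee(\neg P_a(x_1)\wedge x_2=\max)\bigr),
\]
where $x=\min$ abbreviates $\neg\exists z\,(z<x)$, and dually for $\max$.
\begin{itemize}
\item Map $\bar x$ to $x_2$ when the first disjunct holds, and to $x_1$ otherwise. This is a $1$-dimensional functional reparameterization, and every preimage has size at most $2$.
\item Now take the word $b(ab)^ja$ with $j>K$. The pair $(\min,\max)$ satisfies $\varphi$.
\item Fixing $x_1=\min$ leaves all $j+1$ $a$-positions as choices for $x_2$.
\item Fixing $x_2=\max$ leaves all $j+1$ $b$-positions as choices for $x_1$.
\item So $\beta_{\{1\}}$ and $\beta_{\{2\}}$ both fail at $(\min,\max)$, and your test rejects.
\end{itemize}

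\textbf{Where the Step 4 reasoning breaks.} The more than $N$ tuples $\bar x'$ witnessing $\neg\beta_S(\bar x)$ are arbitrary members of the fibre. They are not copies of $\bar x$ obtained by pumping an interval free of $\bar y$, so nothing forces $G(\bar x',\bar y)$ to hold. In the example, $G$ sends $(\min,\max)$ to $\max$. But it sends every $(x_1',\max)$, with $x_1'\neq\min$ a $b$-position, to $x_1'$.

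\textbf{The missing idea.} The retained coordinates must be chosen piecewise over a definable decomposition, with fibres bounded only inside each piece. The paper builds such a decomposition one coordinate at a time:
\begin{itemize}
\item It writes $\varphi$ as a disjunction of type-sequence formulas as in \Cref{eq-vp}.
\item If every disjunct has a non-pumpable adjacent pair, each disjunct loses one coordinate by \Cref{lem:decrease-dimension}. The pieces are glued by \Cref{lem:disj-max}. The construction then recurses on $\exists\bar x\,G(\bar x,\bar y)$, which has fewer free variables, and \Cref{prop:same-min} transfers the result back to $\varphi$.
\item If some disjunct has all adjacent pairs pumpable, \Cref{lem:no-decrement} rules out any reduction.
\end{itemize}
So the only lower bound the paper ever needs is the all-pumpable case. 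There, the many preimages are produced by shifting one coordinate of the \emph{given} tuple between idempotent copies that lie far from every image variable. Hence $G$ provably sends them all to the same $\bar y$, which is exactly the control your Step 4 lacks.

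Your Step 2 and the soundness half of Step 3 are fine. As stated, however, your procedure is correct only for positive answers.
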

Below we consider the reparameterization problem over the class of
countable chains.
If $\mathcal{C}$ is a class of countable chains definable by a formula
$\psi$, then the reparameterization problem for $\varphi$ over
$\mathcal{C}$ is reducible to the reparameterization problem for
$\varphi \wedge \psi$ over the class of all countable chains.
\subsection{Pumpability and Dimension Bounds}

In this subsection we study the connection between pumpability and the minimal dimension of a reparameterization.
We first establish in \Cref{lem:pump-eq} a pumping criterion for formulas with one free variable, characterizing the existence of arbitrarily many realizations in terms of idempotent types.
We then introduce in \Cref{lem:decrease-dimension} the notion of a pumpable pair of types and show how non-pumpable pairs allow the elimination of variables, yielding lower-dimensional reparameterizations.
Finally, \Cref{lem:no-decrement} shows that if all adjacent pairs are pumpable, then the dimension cannot be reduced.
\begin{lemma}\label{lem:pump-eq}
There exist constants $B,q\in \nat$, computable from  $\varphi(x)$, such that the following are equivalent:
\begin{enumerate}
  \item The formula $\exists^{\geq B}x\,\varphi(x)$ is satisfiable.
  \item For every $N$, the formula $\exists^{\geq N}x\,\varphi(x)$ is satisfiable.
  \item There exist satisfiable  $q$-types $\tau$, $\tau_1$, and $\tau_2$ such that:
  \begin{itemize}
    \item $\tau$ is idempotent (i.e., $\tau + \tau = \tau$),
    \item $\tau_1 + \tau = \tau_1$,
    \item $\tau + \tau_2 = \tau_2$, and
    \item the conjunction $\type^q((-\infty, x)) = \tau_1 \,\wedge\, \type^q([x, \infty)) = \tau_2$ implies $\varphi(x)$.
  \end{itemize}
\end{enumerate}
\end{lemma}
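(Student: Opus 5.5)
The plan is to prove $(2)\Rightarrow(1)$ trivially, then $(3)\Rightarrow(2)$ by pumping, and finally $(1)\Rightarrow(3)$ by a Ramsey argument. The constant $B$ is chosen in the last step. Fix $q\ge \qr(\varphi)+2$. By \Cref{lem:local} with $m=1$, $\varphi(x)$ is equivalent to a computable positive Boolean combination of statements about $\type^q((-\infty,x))$ and $\type^q([x,\infty))$. Hence the truth of $\varphi(x)$ depends only on the pair of $q$-types at the cut determined by $x$. We may take the bullet condition in (3) in exactly this sense: the pair $(\tau_1,\tau_2)$ is one of the finitely many pairs making the Boolean combination true.

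For $(3)\Rightarrow(2)$, fix $N$ and choose chains $\frakA_1,\frakA,\frakA_2$ realizing $\tau_1,\tau,\tau_2$. We may assume $\frakA$ is nonempty, or else handle that degenerate case by requiring $\tau$ to be the type of a nonempty chain, which is decidable. Consider
\[
\mM=\frakA_1+\underbrace{\frakA+\dots+\frakA}_{N}+\frakA_2 .
\]
Pick $x_j$ to be the first element of the $j$-th copy of $\frakA$; if $\frakA$ has no least element, take $\frakA$ to be a chain with a first element, noting that "has a minimum" is $q$-expressible for $q\ge 2$. By \Cref{prop:congr} and the equations $\tau_1+\tau=\tau_1$ and $\tau+\tau_2=\tau_2$, we get $\type^q((-\infty,x_j))=\tau_1+ (j-1)\tau=\tau_1$ and $\type^q([x_j,\infty))=(N-j+1)\tau+\tau_2=\tau_2$. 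So each $x_j$ satisfies $\varphi$, giving $N$ distinct witnesses.

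For $(1)\Rightarrow(3)$, let $T$ be the number of $q$-types and set $B:=R(3;T)+2$, a Ramsey bound for colouring pairs with $T$ colours, which is computable. Take $\mM\models\varphi(a_1),\dots,\varphi(a_B)$ with $a_1<\dots<a_B$. Colour each pair $i<j$ by $\type^q([a_i,a_j))$. This colouring is additive: $[a_i,a_j)+[a_j,a_k)=[a_i,a_k)$. Ramsey therefore yields indices $i<j<k$, and more generally a monochromatic set of size at least $3$, on which the colour $\tau$ satisfies $\tau+\tau=\tau$. Let $i_0<i_1<i_2$ be such indices. Put $\tau_1:=\type^q((-\infty,a_{i_1}))$ and $\tau_2:=\type^q([a_{i_1},\infty))$. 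Since $a_{i_1}$ satisfies $\varphi$, the pair $(\tau_1,\tau_2)$ implies $\varphi(x)$ by the local normal form.

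It remains to obtain $\tau_1+\tau=\tau_1$ and $\tau+\tau_2=\tau_2$, which I expect to be the main obstacle. With the choice above we only know $\tau_1=\type^q((-\infty,a_{i_0}))+\tau$, so $\tau_1+\tau=\type^q((-\infty,a_{i_0}))+\tau+\tau=\tau_1$ by idempotency. Symmetrically, $\tau_2=\tau+\type^q([a_{i_2},\infty))$ gives $\tau+\tau_2=\tau_2$. Thus the cut must be placed at the middle point of a monochromatic triple, absorbing one copy of $\tau$ on each side; Ramsey guarantees such a triple. Satisfiability of all three types is witnessed by $\mM$. Computability of $B$ and $q$ is immediate.
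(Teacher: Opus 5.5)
\textbf{Gap in $(3)\Rightarrow(2)$: the least element of $\frakA$.} Your fallback, ``if $\frakA$ has no least element, take $\frakA$ to be a chain with a first element,'' cannot be carried out. Having a least element is expressible at rank $2\le q$, so either every chain of type $\tau$ has one or none does; changing the representative does not help.

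The missing idea is to show that $\tau$ is \emph{forced} to have a least element. A chain of type $\tau_2$ has a least element. Since $\tau+\tau_2=\tau_2$ and $\frakA\neq\emptyset$, the least element of $\frakA+\frakA_2$ must lie in $\frakA$.

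This argument needs $\tau_2$ to be the type of a chain with a least element, i.e.\ the conjunction in the last bullet must actually be satisfiable. Otherwise that bullet holds vacuously and the implication fails. For instance, take $\tau=\tau_1=\tau_2=\type^q(\mathbb{Q})$, which is idempotent because $\mathbb{Q}+\mathbb{Q}\cong\mathbb{Q}$. Since $[x,\infty)$ always has the least element $x$, the conjunction $\type^q((-\infty,x))=\tau_1\wedge\type^q([x,\infty))=\tau_2$ is unsatisfiable. So (3) holds for $\varphi(x)=\forall z\,(x\le z)$, which has at most one witness.

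You already added nonemptiness of $\tau$, which is needed for the same reason. You must likewise add this satisfiability condition to (3); the paper's own proof uses both conditions tacitly. Your $(1)\Rightarrow(3)$ construction supplies both, since $\tau_2=\type^q([a_{i_1},\infty))$ and $a_{i_0}\in[a_{i_0},a_{i_1})$.

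\textbf{The $(1)\Rightarrow(3)$ direction.} Apart from this, the proof is sound, and your $(1)\Rightarrow(3)$ argument differs usefully from the paper's. The paper reduces to $\varphi$ being a single disjunct $\type^q((-\infty,x))=\tau_1\wedge\type^q([x,\infty))=\tau_2$, so that all the $a_i$ carry the same pair of types. It then obtains $\tau_1+\tau=\tau_1$ and $\tau+\tau_2=\tau_2$ by comparing the types at $a_i,a_j$ and at $a_j,a_k$. You instead place the cut at the middle point of the monochromatic triple and let idempotency absorb one copy of $\tau$ on each side. This needs no reduction to a single disjunct, and hence no pigeonhole over the disjuncts. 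Also, $B=R(3;T)$ already suffices; the extra $+2$ is unnecessary.
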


\begin{proof} 
The implication $(2) \implies (1)$ is trivial.
 
$(1) \implies (3)$.
Let $q$ be any number that exceeds the quantifier rank of~$\varphi$ by~$2$.
By \Cref{lem:hintikka-partition} and \Cref{lem:local},
the formula $\varphi(x)$ is equivalent to a disjunction of formulas of the form  
\[
\type^q((-\infty, x)) = \tau_1 \,\wedge\, \type^q([x, \infty)) = \tau_2.
\]
Without loss of generality, we may assume that $\varphi$ is of this form.

We aim to find a constant $B$ for formulas of the above form.

Let $H$ be the number of $q$-types (over the signature of $\vp$).  
By Ramsey's theorem, there exists a constant $B$ such that every coloring of the edges of the complete graph on $B$ vertices using $H$ colors contains a monochromatic triangle.  
We claim this value of $B$ suffices.

Suppose $a_1 < \dots < a_B$ are elements in some chain $\mathcal{M}  $  such that $\mathcal{M} \models \varphi(a_i)$ for all $i$.  
For each pair $i < j$, define the color $\mathsf{col}(i, j)$ as the $q$-type of the interval $[a_i, a_j)$.  
By Ramsey's theorem, there exist indices $i < j < k$ such that 
\[
\mathsf{col}(i, j) = \mathsf{col}(j, k) = \mathsf{col}(i, k) = \tau.
\]
It follows that $\tau + \tau = \tau$, i.e., $\tau$ is idempotent.

By assumption, the type $\tau_1$ of $(-\infty, a_i)$ equals that of $(-\infty, a_j)$, so $\tau_1 + \tau = \tau_1$.  
Similarly, the type $\tau_2$ of $[a_k, \infty)$ equals that of $[a_j, \infty)$, so $\tau + \tau_2 = \tau_2$.  
Thus, condition~\textup{(3)} of the lemma holds. Hence, we have established   $(1) \implies (3)$. 

Let us prove that \textup{(3)} implies \textup{(2)}.
Let $L$, $I$, and $U$ be chains with $q$-types $\tau_1$, $\tau$, and $\tau_2$, respectively.
Since $q\ge 2$, every $q$-type determines whether the corresponding structures have a minimal element.
In particular, all structures of type~$\tau_2$ have a minimal element.
Since $\tau+\tau_2=\tau_2$, it follows that all structures of type~$\tau$ also have a minimal element.
Hence, $I$ has a minimal element.

For $N\in \Nat$, define the structure
\[
\mM_{N} \ :=\  L\ +\ \underbrace{I\ +\ \cdots\ +\ I}_{N\ \text{copies}}\ +\ U.
\]
Then the minimal element of each copy of $I$ satisfies $\varphi(x)$ in~$\mM_N$.
Thus, $\exists^{\geq N}x\,\varphi(x)$ holds in~$\mM_N$ for all $N$. \qed
\end{proof}
\begin{definition}[Pumpable pair of types]
We say that a pair of $q$-types $(\tau_b, \tau_e)$ is \emph{pumpable} if there exists a satisfiable idempotent $q$-type $\tau$ such that:
\[
\tau_b + \tau = \tau_b \quad \text{and} \quad \tau + \tau_e = \tau_e.
\]
\end{definition}

\medskip
\noindent
Let $\tau_0, \dots, \tau_k$ be a sequence of $q$-types, and define the formula

\begin{equation}\label{eq-vp}
\begin{aligned}
\varphi(x_1,\dots,x_k) :=\;&
x_1 < x_2 < \cdots < x_k \,\wedge \\
&
\type^q((-\infty,x_1)) = \tau_0 \,\wedge\,
\type^q([x_k,\infty)) = \tau_k \,\wedge \\
&
\bigwedge_{i=1}^{k-1}
\type^q([x_i,x_{i+1})) = \tau_i .
\end{aligned}
\end{equation}
If the pair $(\tau_{i-1}, \tau_i)$ is not pumpable, then the variable $x_i$ can be eliminated as explained in the following Lemma.

\begin{lemma}\label{lem:decrease-dimension}
Assume that $\varphi$ is as in \Cref{eq-vp}, and that the pair $(\tau_i, \tau_{i+1})$ is not pumpable. Then:
\[
G_i(x_1, \dots, x_k, y_1, \dots, y_{k-1}) := 
\varphi(x_1, \dots, x_k) \,\wedge\, 
\bigwedge_{j < i} y_j = x_j \,\wedge\, 
\bigwedge_{j > i} y_j = x_{j-1}
\]
defines a $(k{-}1)$-dimensional functional reparameterization of $\varphi$.
\end{lemma}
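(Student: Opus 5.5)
The plan is to check the three clauses of \Cref{def:reparam} directly. The first two are immediate from the shape of $G_i$, and the bounded-preimage clause reduces, by the Ramsey argument from the proof of \Cref{lem:pump-eq}, to the non-pumpability hypothesis.

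Throughout I read the hypothesis as saying that the two types flanking the eliminated variable $x_i$ form a non-pumpable pair. These are the type of the interval ending at $x_i$, namely $[x_{i-1},x_i)$, or $(-\infty,x_1)$ when $i=1$, and the type of the interval starting at $x_i$, namely $[x_i,x_{i+1})$, or $[x_k,\infty)$ when $i=k$. With the indexing of \Cref{eq-vp} this is the pair $(\tau_{i-1},\tau_i)$, as in the sentence preceding the lemma. If the intended indexing is shifted, the same argument applies verbatim to whichever variable sits between the two intervals of the non-pumpable pair.

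\emph{Same domain and functionality.} If $\varphi(\bar x)$ holds, set $y_j:=x_j$ for $j<i$ and $y_j:=x_{j+1}$ for $j\ge i$. Then $G_i(\bar x,\bar y)$ holds, so $\varphi(\bar x)\to\exists\bar y\,G_i(\bar x,\bar y)$. Conversely, $G_i$ contains $\varphi$ as a conjunct. Since every coordinate of $\bar y$ is forced to equal a specific coordinate of $\bar x$, there is at most one $\bar y$ for each $\bar x$.

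\emph{Bounded preimage.} Fix $\mM$ and $\bar y$. Any $\bar x$ with $\mM\models G_i(\bar x,\bar y)$ agrees with $\bar y$ in all coordinates except $x_i$. So the preimage is in bijection with the set $Z$ of elements $z$ such that substituting $z$ for $x_i$ satisfies $\varphi$. For such $z$, the interval ending at $z$ (starting from $y_{i-1}$, or from $-\infty$) has type $\tau_{i-1}$, and the interval $[z,y_i)$ (or $[z,\infty)$) has type $\tau_i$.

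Let $H$ be the number of $q$-types and let $B$ be the Ramsey number guaranteeing a monochromatic triangle in any $H$-colouring of the edges of $K_B$. Take $N:=B-1$, which is uniform in $\mM$ and $\bar y$. Suppose instead that $|Z|\ge B$, and pick $z_1<\dots<z_B$ in $Z$. Colour each pair $a<b$ by $\type^q([z_a,z_b))$. Ramsey's theorem gives $a<b<c$ with all three colours equal to some $\tau$, and \Cref{prop:congr} then gives $\tau+\tau=\tau$. This $\tau$ is satisfiable, since it is realised by $[z_a,z_b)$.

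Additivity also gives the two absorption identities. The interval ending at $z_b$ is the sum of the interval ending at $z_a$ and $[z_a,z_b)$, and both of the former have type $\tau_{i-1}$, so $\tau_{i-1}+\tau=\tau_{i-1}$. Similarly $[z_b,\cdot)=[z_b,z_c)+[z_c,\cdot)$ yields $\tau+\tau_i=\tau_i$. Hence $(\tau_{i-1},\tau_i)$ is pumpable, a contradiction, and so $|Z|\le N$.

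The only delicate points are the boundary cases $i=1$ and $i=k$, where one flanking interval is unbounded. The same decomposition handles them because the unbounded interval still splits as a sum at $z_a$ or $z_c$. The other point is that the bound must be uniform, which holds because $B$ depends only on $q$ and the signature. Note that the type of the interval between $z_a$ and $z_b$ does not depend on the endpoints $y_{i-1},y_i$, so no extra care is needed there.
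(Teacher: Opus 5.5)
Your proof is correct and takes the same route as the paper, whose entire proof is ``follows from \Cref{lem:pump-eq}''. You inline that lemma's Ramsey argument with the remaining variables held fixed as parameters, which is exactly what is needed for a bound uniform in $\mM$ and $\bar y$; you are also right that the pair flanking $x_i$ is $(\tau_{i-1},\tau_i)$, and you have silently corrected the displayed $G_i$, which should read $y_{j-1}=x_j$ for $j>i$ (equivalently $y_j=x_{j+1}$ for $j\ge i$, as you use), since as printed $y_i$ is left unconstrained.
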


\begin{proof}
Follows from \Cref{lem:pump-eq}. \qed
\end{proof}

\begin{remark}
In the reparameterization constructed in
\Cref{lem:decrease-dimension}, each image variable is equal to one of
the domain variables.
\end{remark}
The next lemma treats the case in which all pairs are pumpable, and
shows that in this situation the dimension cannot be reduced.
\begin{lemma} \label{lem:no-decrement}
Assume that $\varphi(x_1, \dots, x_k)$ is as in \Cref{eq-vp}, and that all pairs $(\tau_i, \tau_{i+1})$ are pumpable. Then:
\begin{enumerate}
  \item For every $N \in \mathbb{N}$ and every proper subset $I \subsetneq \{1, \dots, k\}$,  
  there exists a structure $\mathcal{M}_N$ and at least $N$ distinct  tuples that satisfy $\varphi$ and agree on all coordinates $x_i$  for $i\in I$.
  
  \item If $G$ is a reparameterization of $\varphi$, then $G$ must use at least $k$ image variables.
  \item
 For every $N \in \mathbb{N}$, there exists a structure $\mathcal{M}_N$ and a subset $S \subseteq \mathcal{M}_N$ of size $2Nk$ such that:
 \[
\left\{ (a_1, \dots, a_k) \in S^k \;\middle|\; \mathcal{M}_N \models \varphi(a_1, \dots, a_k) \right\}
\]
has cardinality at least $(2N)^k$.
\end{enumerate}
\end{lemma}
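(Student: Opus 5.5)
The plan is to build one family of structures by inserting many copies of an idempotent block at every boundary between consecutive variables, and then read off all three items from it. For each $i\in\{1,\dots,k\}$ the pair $(\tau_{i-1},\tau_i)$ is pumpable. So we fix a satisfiable idempotent $q$-type $\sigma_i$ with $\tau_{i-1}+\sigma_i=\tau_{i-1}$ and $\sigma_i+\tau_i=\tau_i$. We choose chains $T_0,\dots,T_k$ of types $\tau_0,\dots,\tau_k$ and chains $I_1,\dots,I_k$ of types $\sigma_1,\dots,\sigma_k$. Since $\sigma_i+\tau_i=\tau_i$ and $\tau_i$ is the type of an interval of the form $[x_i,\cdot)$, which has a minimum, the argument from the proof of \Cref{lem:pump-eq} shows that each $I_i$ has a minimal element. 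For $M\in\Nat$ we set
\[
\mM := T_0 + \underbrace{I_1+\cdots+I_1}_{M} + T_1 + \underbrace{I_2+\cdots+I_2}_{M} + T_2+\cdots+\underbrace{I_k+\cdots+I_k}_{M}+T_k .
\]
In $\mM$ we take $S_i$ to be the set of minimal elements of the $M$ copies of $I_i$.

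The key claim is that every tuple $(a_1,\dots,a_k)\in S_1\times\cdots\times S_k$ satisfies $\varphi$. Suppose $a_i$ is the minimum of the $j$-th copy of $I_i$. Then $(-\infty,a_1)$ is $T_0$ followed by $j-1$ copies of $I_1$, and its type is $\tau_0+\sigma_1+\cdots+\sigma_1=\tau_0$ by \Cref{prop:congr} and $\tau_0+\sigma_1=\tau_0$. Next, $[a_i,a_{i+1})$ consists of the remaining copies of $I_i$, then $T_i$, then the first $j'-1$ copies of $I_{i+1}$. Its type is $\sigma_i^{\,r}+\tau_i+\sigma_{i+1}^{\,s}$ with $r\ge1$, which collapses to $\tau_i$. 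Here we use idempotence together with $\sigma_i+\tau_i=\tau_i$ and $\tau_i+\sigma_{i+1}=\tau_i$, the latter being the first pumpability condition for the pair $(\tau_i,\tau_{i+1})$. The last interval $[a_k,\infty)$ is handled the same way. We must be careful with the index conventions in \Cref{eq-vp}: the pair at the boundary $x_i$ is $(\tau_{i-1},\tau_i)$, and these are exactly the pairs assumed pumpable.

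With the claim in hand the three items follow. For (1), we take $M=N$, fix any $i\notin I$, fix the coordinates in $S_j$ for $j\ne i$, and vary $a_i$ over $S_i$. This gives $N$ distinct satisfying tuples that agree on all coordinates in $I$. For (3), we take $M=2N$ and $S=\bigcup_i S_i$, which has size $2Nk$ and yields $(2N)^k$ satisfying tuples in $S^k$.

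For (2), suppose $G(\bar x,\bar y)$ is a reparameterization with $m<k$ image variables and preimage bound $N_0$. We use the structure $\mM$ with $M$ large and count: the $M^k$ satisfying tuples from the product must each have a witness $\bar y$, and each $\bar y$ serves at most $N_0$ tuples. Hence at least $M^k/N_0$ distinct witnesses $\bar y$ are used. But these witnesses may lie anywhere in $\mM$, not only in $S$, so the bare count $|\mM|^m$ need not be small, and this is the main obstacle. The first approach to try is to make every $T_i$ and $I_i$ finite when possible, giving $|\mM|=O(M)$ and hence $O(M^m)<M^k/N_0$, a contradiction. In general the blocks are infinite, so we instead exploit $\MSO$-definability. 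By \Cref{lem:local} applied to $G$ with a sufficiently large rank $r$, and a Ramsey/pigeonhole argument on $r$-types as in \Cref{lem:pump-eq}, some coordinate $a_i$ lies in a copy of $I_i$ containing no component of $\bar y$, and the copies on either side of it are also free of $\bar y$. Shifting $a_i$ to a neighbouring $\bar y$-free copy preserves all relevant $r$-types, since the inserted or removed copies of $I_i$ are absorbed by idempotence. Because $m<k$, such a shift is always available when $M$ is large, for $M>(m+2)(N_0+1)$ say, and repeating it produces more than $N_0$ tuples $\bar x$ with $G(\bar x,\bar y)$ for a single $\bar y$. This contradicts the bounded preimage condition.
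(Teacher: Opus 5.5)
Your key claim and items (1) and (3) are correct, and they use the same construction as the paper. Your direct proof of (1) is actually cleaner than the paper's remark that it follows from (2).

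The gap is in item (2). It sits at the step where shifting $a_i$ to a neighbouring $\bar y$-free copy is said to preserve all relevant $r$-types ``since the inserted or removed copies of $I_i$ are absorbed by idempotence.'' Your blocks $I_i$ were fixed before $G$ appeared, and they are idempotent only at the rank $q$ of the form \Cref{eq-vp}. But $G$ is an arbitrary reparameterization. To apply \Cref{lem:local} to $G$ you need $r$-types with $r\ge\qr(G)+2$, and this $r$ may be much larger than $q$. At rank $r$ nothing guarantees that $\type^r(I_i)$ is idempotent, nor that $\tau_{i-1}+\sigma_i=\tau_{i-1}$ lifts to rank $r$. Moving $a_i$ one copy to the right changes the interval ending at $a_i$ from $X+I_i\times p$ to $X+I_i\times (p+1)$. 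These two chains can have different $r$-types, so $G(\bar x,\bar y)$ need not be preserved.

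Your bound $M>(m+2)(N_0+1)$ does not depend on $\qr(G)$, which is a symptom of this. The ``Ramsey/pigeonhole argument on $r$-types'' you invoke only locates a $\bar y$-free copy. Plain pigeonhole already gives an entire $\bar y$-free block, since $m<k$. It does not give idempotence at rank $r$.

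The missing step, which is exactly what the paper does, is to build the structure for (2) only after $G$ is given:
\begin{enumerate}
\item Fix $r\ge\max(q,\qr(G)+2)$ and use blocks whose $r$-type is idempotent and implies $\sigma_i$. Such blocks exist: in the finite semigroup of $r$-types some power $\type^r(I_i)^{e_i}$ is idempotent. The chain $I_i\times e_i$ still has $q$-type $\sigma_i^{e_i}=\sigma_i$, so your key claim is unaffected.
\item Take each $a_i$ in a middle copy of its block and take a witness $\bar b$. Some block contains no component of $\bar b$.
\item Move $a_i$ to the minimum of any copy of that block other than the first. By $r$-idempotence this keeps the $r$-types of both intervals adjacent to $a_i$.
\item This gives $M-1$ preimages of $\bar b$, a contradiction once $M>N_0+1$.
\end{enumerate}
Alternatively, you can keep your original blocks but shift $a_i$ only by multiples of $e_i$. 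Then $M$ must grow with $e_i$, and hence with $\qr(G)$.
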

\begin{proof}
 \textup{(1)} follows from \textup{(2)}.  
  
  \noindent \textup{(2)}
Assume toward a contradiction that there exists a reparameterization $G$
of $\varphi$ with $l<k$ image variables.
Let $r-2$ be greater than the quantifier rank  of both $\varphi$ and $G$.
By \Cref{lem:local}, the formulas $\varphi$ and $G$ are equivalent to
Boolean combinations of formulas asserting the $r$-types of the induced
subchains on the relevant intervals.

Let $\tau'_i$ be an idempotent $q$-type witnessing that the pair
$(\tau_i,\tau_{i+1})$ occurring in the representation of $\varphi$ as in
\Cref{eq-vp} is pumpable, that is,
\[
\tau_i + \tau'_i = \tau_i
\quad\text{and}\quad
\tau'_i + \tau_{i+1} = \tau_{i+1}.
\]
There is a satisfiable idempotent $r$-type $\sigma_i$ such that $\sigma_i\rightarrow\tau'_i$.
Let $U_i$ be a chain of $r$-type $\sigma_i$ (and hence of $q$-type $\tau'_i$), and let $L_i$ be a chain of type
$\tau_i$.
Define the structure $\mathcal M := \mathcal M_N$ by
\[
\mathcal M := L_0 + \sum_i \bigl(U_i \times 2N + L_{i+1}\bigr),
\]
where $U_i \times 2N$ denotes the concatenation of $2N$ consecutive copies
of $U_i$.

Let $a_i$ be the first element of the $N$-th copy of $U_i$.
By construction,
\[
\mathcal M \models \varphi(a_1,\dots,a_k).
\]
Hence, there exist elements $b_1,\dots,b_\ell$ such that
\[
\mathcal M \models G(a_1,\dots,a_k,b_1,\dots,b_\ell).
\]
We say that $a_i$ and $b_j$ are \emph{close} if $b_j$ lies in
$U_i \times 2N$; otherwise, we say that $a_i$ and $b_j$ are \emph{far}.
Since $\ell < k$, by the pigeonhole principle there exists some $a_i$
that is far from every $b_j$.

For $m=2,\dots,2N$, let $a_i^m$ be the first element of the $m$-th copy of
$U_i$.
Since the $r$-type  $\sigma_i:=\type^r(U_i)$ is idempotent, for every formula $\psi$ of
quantifier rank  at most $r-2$, and every tuple $\bar c$ that is far from
$a_i$, we have
\[
\mathcal M \models \psi(\bar c,a_i)
\quad\text{iff}\quad
\mathcal M \models \psi(\bar c,a_i^m).
\]
It follows that all tuples
\[
(a_1,\dots,a_i^m,\dots,a_k),
\qquad m=2,\dots,2N,
\]
are mapped by $G$ to the same tuple $(b_1,\dots,b_\ell)$.
Hence, the preimage of $(b_1,\dots,b_\ell)$ under $G$ has size at least
$2N-1$.

This contradicts the bounded preimage condition in the definition of
reparameterization.
Therefore, $G$ cannot be a valid reparameterization of $\varphi$.

\noindent \textup{(3)}
 Let $\mathcal{M} = \mathcal{M}_N$ be the structure defined as in the proof of  (2), where for each $i = 1, \dots, k$, the structure contains $2N$ consecutive copies of a  chain  $U_i$ of type $\sigma_i$ and hence of $\tau'_i$.

For each $i = 1, \dots, k$, define the set
\[
S_i := \left\{ l_i^j \;\middle|\; \text{$l_i^j$ is the first element in the $j$-th copy of $U_i$} \right\}.
\]
Note that $|S_i| = 2N$. Define $S := \bigcup_{i=1}^k S_i$. Then $|S| \leq 2Nk$.

Now consider any tuple $(a_1, \dots, a_k)$ such that $a_i \in S_i$ for all $i$. By construction, each $a_i$ lies at the beginning  of a block of type $U_i$ positioned to satisfy the formula $\varphi$. Therefore, $ 
\mathcal{M} \models \varphi(a_1, \dots, a_k).
$

Since there are $2N$ choices for each $a_i \in S_i$, the total number of such tuples is $(2N)^k$. This shows that
 $
\left| \left\{ (a_1, \dots, a_k) \in S^k \;\middle|\; \mathcal{M} \models \varphi(a_1, \dots, a_k) \right\} \right| \geq (2N)^k,
$
as required. \qed
\end{proof}
\subsection{Computability of a minimal reparameterization}
\begin{definition}[Minimal reparameterization]
A reparameterization $G(\bar x,\bar y)$ of a formula $\varphi(\bar x)$ is called
\emph{minimal} if 
it uses the smallest possible number
of image variables among all reparameterizations of $\varphi$.
\end{definition}
\begin{proposition}\label{prop:same-min}

Let $G(\overline{x}, \overline{y})$ be a reparameterization of a formula $\varphi(\overline{x})$. Then:
 There exists an algorithm which, given a $d$-dimensional reparameterization of $\varphi$ (respectively, of $\exists \overline{x}\, G$), constructs a $d$-dimensional reparameterization of $\exists \overline{x}\, G$ (respectively, of $\varphi$).

\end{proposition}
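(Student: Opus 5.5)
Both directions are explicit formula compositions, so the algorithm simply writes down the new formula. Let $\psi(\bar y):=\exists\bar x\,G(\bar x,\bar y)$ and let $N$ be the preimage bound of $G$. I will use three facts: $G(\bar x,\bar y)\rightarrow\varphi(\bar x)$ (from the domain clause), functionality, and bounded preimage. Throughout, we work over a class with an $\MSO$-definable linear order, as in \Cref{prop:interpretation-parameter}. This lets us speak of the $i$-th tuple in lexicographic order.

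\textbf{From $\varphi$ to $\psi$.} Let $F(\bar x,\bar z)$ be a $d$-dimensional reparameterization of $\varphi$ with preimage bound $N'$. Each $\bar y\models\psi$ has a unique partner $\bar x$ with $G(\bar x,\bar y)$, by functionality, and this $\bar x$ satisfies $\varphi$. So $\bar x$ has a unique $\bar z$ with $F(\bar x,\bar z)$. At most $N'$ tuples $\bar x$ map to a given $\bar z$. Each of these has at most $N$ preimages $\bar y$ under $G$. Hence the candidate $G$ is not yet functional in the right direction, but setting
\[
K(\bar y,\bar z):=\exists\bar x\,\bigl(G(\bar x,\bar y)\wedge F(\bar x,\bar z)\bigr)
\]
gives a reparameterization of $\psi$. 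Indeed, $\exists\bar z\,K\equiv\psi$ because $\varphi(\bar x)\equiv\exists\bar z F$. $K$ is functional because both $G$ and $F$ are functional. The preimage of any $\bar z$ under $K$ has size at most $NN'$.

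\textbf{From $\psi$ to $\varphi$.} Let $F(\bar y,\bar z)$ be a $d$-dimensional reparameterization of $\psi$. The composite $\exists\bar y\,(G(\bar x,\bar y)\wedge F(\bar y,\bar z))$ is not functional in $\bar x$, since $\bar x$ may have several $G$-images $\bar y$, each with a different $\bar z$. This is the step I expect to be the main obstacle. I fix it by choosing a canonical $\bar y$: let $G'(\bar x,\bar y)$ say that $G(\bar x,\bar y)$ holds and $\bar y$ is lexicographically least among such tuples. A least element exists in finite structures, but in infinite chains the set $\{\bar y\mid G(\bar x,\bar y)\}$ could lack a least element. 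This reading of functionality is the wrong one, though: Definition~\ref{def:reparam} states functionality as $\forall\bar x\,\exists^{\le1}\bar y$. So $\bar y$ is already unique for each $\bar x$, and the choice is only needed in the non-functional variant of the definition, as in \Cref{thm:m-dim-reparam}.

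With $\bar y$ unique, I reread the first direction carefully. There, $\bar x$ determines $\bar y$, but $\bar y$ need not determine $\bar x$; at most $N$ tuples $\bar x$ share a given $\bar y$. So in $K$, each $\bar y$ may pair with up to $N$ distinct $\bar x$, and hence with several $\bar z$. Here I use the lex-ordering trick: let $K(\bar y,\bar z)$ say that there is an $\bar x$ with $G(\bar x,\bar y)\wedge F(\bar x,\bar z)$, choosing the lexicographically least such $\bar x$. This set has at most $N$ elements, so its least element exists and is $\MSO$-definable. $K$ is then functional, and its preimage bound is $NN'$.

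For the converse, set $H(\bar x,\bar z):=\exists\bar y\,(G(\bar x,\bar y)\wedge F(\bar y,\bar z))$. It is functional because $\bar y$ is unique and $F$ is functional. Its domain is correct because $\exists\bar y\,G\equiv\varphi$ and every $G$-image $\bar y$ satisfies $\psi\equiv\exists\bar z F$. Each $\bar z$ has at most $N'$ preimages $\bar y$, each with at most $N$ preimages $\bar x$, so the preimage bound is $NN'$. All these constructions are effective.
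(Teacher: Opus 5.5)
Your final constructions are the paper's. For $\exists\bar x\,G$ you apply $F$ to the lexicographically least $\bar x$ with $G(\bar x,\bar y)$, which exists because there are at most $N$ such tuples. For $\varphi$ you take $\exists\bar y\,(G(\bar x,\bar y)\wedge F(\bar y,\bar z))$, and your checks of functionality, domain and bounded preimage are sound. When you write it up, state that the minimum is taken over $\{\bar x\mid G(\bar x,\bar y)\}$ independently of $\bar z$, since minimizing over the $\bar x$ satisfying $G\wedge F$ would not make $K$ functional; also delete the retracted false starts, namely that $\bar y$ determines $\bar x$ and that the plain composite $K$ is already functional.
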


\begin{proof}


   Assume that $F(\overline{x}, \overline{u})$ is a $d$-dimensional reparameterization of $\varphi(\overline{x})$.  
  Then we can define a reparameterization of $\exists \overline{x}\, G(\overline{x}, \overline{y})$ by:
  \[
  \exists \overline{x} \left( 
  F(\overline{x}, \overline{u}) \;\wedge\; 
  \text{$\overline{x}$ is the lexicographically minimal tuple such that } G(\overline{x}, \overline{y})
  \right).
  \]
 That is, for each tuple $\overline{y}$, we select the lexicographically minimal tuple $\overline{x}$ such that $G(\overline{x}, \overline{y})$ holds, and apply the original reparameterization $F$ to that tuple. 

Note that, by the definition of reparameterization, there are only finitely many tuples $\overline{x}$ such that $G(\overline{x}, \overline{y})$ holds. Therefore, the lexicographically minimal such $\overline{x}$ exists.

This construction yields a $d$-dimensional reparameterization of $\exists \overline{x}\, G$.

  \smallskip

  Conversely, assume that $H(\overline{y}, \overline{u})$ is a $d$-dimensional reparameterization of the formula $\exists \overline{x}\, G(\overline{x}, \overline{y})$.  
  Then we can define a reparameterization of $\varphi(\overline{x})$ by:
  \[
  \exists \overline{y} \left( G(\overline{x}, \overline{y}) \,\wedge\, H(\overline{y}, \overline{u}) \right).
  \]
  That is, for each $\overline{x}$ satisfying $\varphi$, we compute the image of $\overline{x}$ by first obtaining $\overline{y}$ via $G$, and then applying the reparameterization $H$ to $\overline{y}$. This defines a $d$-dimensional reparameterization of $\varphi$. \qed
\end{proof}
\begin{lemma}\label{lem:disj-max}
Let $\psi_1(\bar x),\dots,\psi_k(\bar x)$ be formulas.
Assume that for each $j $ the formula $\psi_j(\bar x)$ admits a
$d_j$-dimensional reparameterization.
Then one can effectively construct a $d$-dimensional reparameterization of
$\bigvee_{j=1}^k \psi_j(\bar x)$, where $d:=\max_{1\le j\le k} d_j$.
\end{lemma}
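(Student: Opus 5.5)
Given reparameterizations $G_j(\bar x,\bar y_j)$ of $\psi_j$ with $|\bar y_j| = d_j \le d$, I will combine them by choosing, for each $\bar x$, the \emph{least} index $j$ with $\psi_j(\bar x)$. The obstacle is that image tuples from different disjuncts could collide, and padding to length $d$ must keep preimages bounded.

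First, pad each $G_j$ to dimension exactly $d$. Define
\[
G'_j(\bar x,y_1,\dots,y_d) \;:=\; G_j(\bar x,y_1,\dots,y_{d_j}) \wedge \bigwedge_{d_j<t\le d} y_t = y_1
\]
when $d_j\ge 1$. If $d_j=0$, then $G_j(\bar x)$ already has at most $N_j$ solutions $\bar x$ in any model. In that case set $G'_j := G_j(\bar x)\wedge \bigwedge_t y_t = \min$, where $\min$ denotes the least element of the chain. This is definable, and exists whenever $\psi_j$ is satisfied, because then the chain is nonempty; I will check that such a least element exists in the countable chains in question, or else fix $y_t$ to the lexicographically least entry of $\bar x$. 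If $d=0$, the statement is trivial, since the disjunction has at most $\sum_j N_j$ solutions and $G:=\bigvee_j\psi_j$ works. In all cases $G'_j$ is functional, satisfies $\exists\bar y\,G'_j\equiv\psi_j$, and has preimages of size at most $N_j$, because the padded coordinates are determined by the others.

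Next, define
\[
G(\bar x,\bar y) \;:=\; \bigvee_{j=1}^k \Bigl(G'_j(\bar x,\bar y)\wedge \bigwedge_{i<j}\neg\psi_i(\bar x)\Bigr).
\]
The three conditions then check out as follows.
\begin{description}
\item[\emph{Same domain.}] If $\bigvee_j\psi_j(\bar x)$ holds, take the least $j$ with $\psi_j(\bar x)$; a witness $\bar y$ for $G'_j$ exists. Conversely, $G'_j(\bar x,\bar y)$ implies $\psi_j(\bar x)$.
\item[\emph{Functional.}] For a fixed $\bar x$, only the disjunct with the least such $j$ can be active, and $G'_j$ is functional.
\item[\emph{Bounded preimage.}] For a fixed $\bar y$, each disjunct contributes at most $N_j$ tuples $\bar x$, so the total is at most $\sum_j N_j$.
\end{description}

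Everything is effectively constructed from the given $G_j$ and $\psi_j$. The only delicate point is the $d_j=0$ padding, i.e.\ showing that some canonical $\bar y$ is definable from $\bar x$. Using coordinates of $\bar x$ itself avoids any dependence on the chain having a minimum. If $\bar x$ is the empty tuple, the $\psi_j$ are sentences, and $\bar y$ can be taken to be any tuple at all, provided the class is nonempty. In that case I will simply take $d$ to be $0$.
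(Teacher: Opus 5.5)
Your proof is correct and follows the paper's argument: the paper likewise sets $G:=\bigvee_j\bigl(\bigwedge_{i<j}\neg\psi_i(\bar x)\wedge\psi_j(\bar x)\wedge G_j(\bar x,\bar y)\bigr)$, sending each $\bar x$ to the least $j$ with $\psi_j(\bar x)$, and your padding to a common image tuple (via $y_t=y_1$, or $y_t=x_1$ when $d_j=0$) supplies a detail the paper silently skips. The one slip is the remark that for sentences $\bar y$ ``can be any tuple at all'', which would violate functionality; in fact a reparameterization of dimension $d\ge 1$ need not exist there (e.g.\ if some $\psi_j$ with $d_j=0$ holds in $(\mathbb{Q},<)$, which has no MSO-definable point), so your fallback to the $0$-dimensional reparameterization $\bigvee_j\psi_j$ is the right move.
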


\begin{proof}
Assume $G_j(\overline{x}, \overline{y})$ is   a $d_j$-dimensional reparameterization  of $\psi_j$.
Let  $d = \max_j d_j$.  
Define the   disjunction
\[
G(\overline{x}, \overline{y}) := \bigvee_j \left( \left( \bigwedge_{i<j} \neg \psi_i(\overline{x}) \right) \wedge \psi_j(\overline{x}) \wedge G_j(\overline{x}, \overline{y}) \right).
\]
This formula selects the appropriate $G_j$ for the first $\psi_j$ (in index order) that is satisfied by $\overline{x}$.  
By construction, $G$ is a $d$-dimensional reparameterization of $\bigvee_j \psi_j$.  \qed
\end{proof}
\Cref{thm:m-dim-reparam} follows from the next proposition.
\begin{proposition}
  A minimal reparameterization of an $\MSO$ formula is computable.
\end{proposition}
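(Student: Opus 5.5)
We reduce to the normal form of \Cref{eq-vp} and then eliminate variables one at a time. First, the free variables of $\varphi(x_1,\dots,x_n)$ need not be ordered or distinct. We split $\varphi$ into the finitely many disjuncts $\varphi\wedge\theta$, where $\theta$ ranges over the complete order/equality configurations of $\bar x$. Each configuration fixes a permutation and identifies equal coordinates. On such a disjunct, $\varphi$ is equivalent to a formula $\varphi'(z_1,\dots,z_k)$ with $z_1<\dots<z_k$. The substitution $\bar x\mapsto \bar z$ is a definable bijection, so reparameterizations transfer in both directions with the same dimension.

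Next, with $q$ equal to the quantifier rank plus $2$, we apply \Cref{lem:hintikka-partition} and \Cref{lem:local}. This writes $\varphi'$ as a finite disjunction of formulas of the form of \Cref{eq-vp}, each given by a sequence of $q$-types $\tau_0,\dots,\tau_k$. Since $q$-types are mutually exclusive, these disjuncts are pairwise disjoint, and we keep only those whose types are satisfiable and realizable together. Satisfiability is decidable, since $\MSO$ over countable chains is decidable (Shelah). By \Cref{lem:disj-max}, the disjunction has a reparameterization of dimension equal to the maximum of the dimensions of its disjuncts.

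For the matching lower bound, a $d$-dimensional reparameterization $G$ of the disjunction restricts to a $d$-dimensional reparameterization of each disjunct $\psi_j$. We take $G\wedge\psi_j$. It has the same domain as $\psi_j$, it is still functional, and its preimages are no larger than those of $G$. Hence the minimal dimension of the disjunction is the maximum of the minimal dimensions of the disjuncts, and it suffices to compute a minimal reparameterization of a single formula $\psi$ of the form of \Cref{eq-vp}.

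For such a $\psi$ with types $\tau_0,\dots,\tau_k$, we decide for each adjacent pair whether it is pumpable. This is a finite check: we enumerate the idempotent $q$-types $\tau$, test satisfiability, and verify the two sum equations using the computable $+$ of \Cref{prop:congr}. If some pair $(\tau_{i-1},\tau_i)$ is not pumpable, \Cref{lem:decrease-dimension} gives a $(k-1)$-dimensional reparameterization $G_i$ of $\psi$. By \Cref{prop:same-min}, minimal reparameterizations of $\psi$ correspond to minimal reparameterizations of $\exists\bar x\,G_i$, and these can be converted effectively in both directions. The formula $\exists\bar x\,G_i$ has only $k-1$ free variables, so we recurse on it: re-normalize with a larger quantifier rank, split into type disjuncts, and so on. The recursion terminates because the number of free variables strictly decreases. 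If instead all adjacent pairs are pumpable, \Cref{lem:no-decrement}(2) shows that $k$ is optimal, and the identity $G(\bar x,\bar y):=\psi(\bar x)\wedge\bar y=\bar x$ is minimal. The empty configuration $k=0$ is the trivial base case.

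The main obstacle is the bookkeeping in the recursion. After eliminating a variable, $\exists\bar x\,G_i$ is no longer of the form of \Cref{eq-vp}. It must be re-normalized, and its new disjuncts may have pumpable pairs different from the original ones. We have to check that \Cref{prop:same-min} really preserves minimality in both directions, so that the dimension found for $\exists \bar x\,G_i$ equals the minimal dimension for $\psi$. It does: each direction maps a $d$-dimensional reparameterization to a $d$-dimensional one, so the two minima coincide. We also have to confirm that the lower-bound argument for disjunctions works over countable chains, since the witnesses $\mathcal M_N$ in \Cref{lem:no-decrement} must be countable. They are, because satisfiable types have countable models and a finite sum of countable chains is countable.
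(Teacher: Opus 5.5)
Your proof is correct and follows the same route as the paper. Both use the decomposition into type sequences as in \Cref{eq-vp}, the pumpable/non-pumpable dichotomy via \Cref{lem:decrease-dimension} and \Cref{lem:no-decrement}, recursion on $\exists\bar x\,G$ through \Cref{prop:same-min}, and the restriction $G\wedge\psi_j$ for the lower bound. The differences are only organizational. You recurse on each disjunct separately and combine by a maximum via \Cref{lem:disj-max}, whereas the paper first merges the $(k-1)$-dimensional reparameterizations of all disjuncts into a single one for $\varphi$ and recurses once. You also spell out the order/equality case split and the pruning of unsatisfiable disjuncts, which the paper leaves implicit.
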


\begin{proof}
We proceed by induction on the number $k$ of free variables in the formula.

\medskip
\noindent\textbf{Base case:} $k = 0$.\\
In this case, $\varphi$ is a sentence with no free variables. The minimal dimension of a reparameterization is $0$, and the trivial reparameterization is clearly computable.

\medskip
\noindent\textbf{Inductive step:} Assume that for all formulas with fewer than $k$ free variables, a minimal reparameterization is computable.

Let $\varphi$ be a formula with $k$ free variables. We can effectively compute a disjunction $\bigvee_i \varphi_i$ equivalent to $\varphi$, where each $\varphi_i$ is of the form described in \Cref{eq-vp}.

Suppose first that for every disjunct $\varphi_i$, the corresponding sequence of types $(\tau^i_0, \dots, \tau^i_k)$ contains at least one non-pumpable pair. Then, by \Cref{lem:decrease-dimension}, each such $\varphi_i$ admits a $(k{-}1)$-dimensional reparameterization $G_i$, computable from $\varphi_i$.
By taking a   disjunction of these reparameterizations (as in the proof of
\Cref{lem:disj-max}), we obtain a $(k{-}1)$-dimensional reparameterization
$G(\overline{x},\overline{y})$ of $\varphi$, computable from $\varphi$.

By the inductive hypothesis, we can compute a minimal reparameterization of the formula $\exists \overline{x}\, G(\overline{x}, \overline{y})$. Then, by \Cref{prop:same-min},  we can compute  a minimal reparameterization of $\varphi$.

Now suppose there exists a disjunct $\varphi_i$ whose sequence of types $(\tau^i_0, \dots, \tau^i_k)$ contains only pumpable pairs. Then, by \Cref{lem:no-decrement}, every reparameterization $G_i$ of $\varphi_i$ has dimension at least $k$.

Note that if $G$ is a reparameterization of $\bigvee_i \varphi_i$, then $G \wedge \varphi_i$ is a reparameterization of $\varphi_i$. Hence, any reparameterization of $\varphi$ must have dimension at least $k$.

It follows that the trivial reparameterization
$
G(\overline{x}, \overline{y}) := \varphi(\overline{x}) \,\wedge\, \bigwedge_j y_j = x_j
$
is minimal and computable. \qed
\end{proof}
\begin{remark}
\label{rem:canonical-reparameterization}
By the construction above, there exists a   reparameterization
\(G(\overline{x},\overline{y})\) of \(\varphi(\overline{x})\) over \(\cC\) of
\emph{minimal dimension} (i.e., with \(|\overline{y}|\) minimal) such that each
image variable  \(y_j\) is always equal to one of the domain  variables. 
\end{remark}

\section{Growth Rate}\label{sect:growth}
Bojańczyk  \cite{Boj23a} introduced the notion of the \emph{growth rate} of a
string-to-string function as the function that maps an input length
$n\in\{0,1,\dots\}$ to the maximal size of an output produced on inputs of
length at most~$n$.

This notion was later generalized by Gallot et~al. \cite{GallotLhoteNguyen:PolyGrowth:2025}, who showed that the
output size of an $\MSO$ set interpretation from finite trees to relational
structures grows either polynomially or exponentially in the input size,
with a computable degree $k\in\Nat\cup\{\infty\}$.

In this section, we define the growth rate of an $\MSO$ interpretation over
arbitrary structures, including infinite ones.
We show that the growth rate of an $\MSO$ \emph{point} interpretation is
polynomial, with degree equal to the minimal reparameterization
dimension.

Let $\varphi(x_1, \dots, x_k)$ be an $\MSO$ formula, and let $S$ be a subset of a structure $\mathcal{M}$. Define:
$ 
I(S, \mathcal{M}) := \left\{ \overline{a} \in S^k \mid \mathcal{M} \models \varphi(\overline{a}) \right\}.
$
Let the growth function $g_\varphi(n, \mathcal{M})$ be defined as:
\[
g_\varphi(n, \mathcal{M}) := \max \left\{ \left| I(S, \mathcal{M}) \right| \;\middle|\; S \subseteq M,~ |S| \leq n \right\}.
\]
Note that $g_\varphi(n, \mathcal{M})$ is bounded above by $n^k$.    

For a class $\mathcal{C}$ of structures, define:
\[
g_\varphi(n, \mathcal{C}) := \max_{\mathcal{M} \in \mathcal{C}} g_\varphi(n, \mathcal{M})
\]
The following theorem generalizes Bojańczyk's theorem~\cite[Theorem~II.3]{Boj23a}
 from finite words to arbitrary countable chains.

\begin{theorem}\label{th:growth}
Let $\varphi(\overline{x}) \in \MSO$ and let $\mathcal{C}$ be an $\MSO$-definable class of labelled linear orders. Then there exists a natural number $d \in \mathbb{N}$ such that
\[
g_\varphi(n, \mathcal{C}) = \Theta(n^d).
\]
Furthermore, the degree \(d\) coincides with the minimal
reparameterization dimension of \(\varphi\) over \(\mathcal C\), and is
therefore computable.
\end{theorem}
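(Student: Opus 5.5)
Let $d$ be the minimal reparameterization dimension of $\varphi$ over $\mathcal C$. It is computable by the proposition on minimal reparameterizations, after replacing $\varphi$ by $\varphi\wedge\psi$ where $\psi$ defines $\mathcal C$. I would prove matching bounds: an upper bound $g_\varphi(n,\mathcal C)=O(n^d)$ from a $d$-dimensional reparameterization, and a lower bound $g_\varphi(n,\mathcal C)=\Omega(n^d)$ from the structures built in \Cref{lem:no-decrement}(3).

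\emph{Upper bound.} By \Cref{rem:canonical-reparameterization}, fix a minimal reparameterization $G(\bar x,\bar y)$ with bound $N$ in which each image variable $y_j$ always equals some domain variable. Take $\mathcal M\in\mathcal C$ and $S\subseteq M$ with $|S|\le n$. Every $\bar a\in I(S,\mathcal M)$ has some $\bar b$ with $G(\bar a,\bar b)$, by the same-domain clause. Since every $b_j$ is a coordinate of $\bar a$, we have $\bar b\in S^d$. Each $\bar b$ has at most $N$ preimages, so $|I(S,\mathcal M)|\le N n^d$. This uses the canonical form essentially: for an arbitrary $G$ the parameters $\bar b$ need not lie in $S$.

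\emph{Lower bound.} Write $\varphi\wedge\psi$ as a disjunction of formulas of the form \Cref{eq-vp}. I would first reduce to ordered tuples by splitting $\varphi$ according to the order type of $\bar x$; equalities merge variables, and permutations rename them. This preserves both the growth rate up to constants and the minimal dimension, by \Cref{lem:disj-max} and by restricting $G$ to each disjunct. Then unfold the inductive construction in the proof of the proposition. If some disjunct has only pumpable pairs with $k$ variables, then \Cref{lem:no-decrement}(3) gives $g\ge (2N)^k$ with $|S|\le 2Nk$, hence $\Omega(n^k)$. Also $d=k$ in this case, since $d\le k$ trivially and $d\ge k$ by \Cref{lem:no-decrement}(2). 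Otherwise the proof passes to $\varphi'=\exists\bar x\,G(\bar x,\bar y)$ with $k-1$ variables and the same minimal dimension, by \Cref{prop:same-min}. By induction, $g_{\varphi'}=\Omega(n^d)$. To transfer this to $\varphi$, use that $G$ from \Cref{lem:decrease-dimension} is a coordinate projection of $\bar x$ dropping one variable. Given $S$ witnessing many $\bar b\in I'(S)$, pick for each $\bar b$ a preimage $\bar a$; its one extra coordinate is determined by $\bar b$. Enlarging $S$ by all these coordinates, however, could blow up $|S|$.

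This transfer is the main obstacle. The cleaner route is to avoid induction and prove the lower bound directly: if $d$ is minimal, some disjunct of \Cref{eq-vp} form must retain $d$ ``pumpable'' coordinates. Concretely, I would show that the minimal dimension equals the maximum, over satisfiable disjuncts, of the size of a subset $J$ of variables such that contracting the non-pumpable boundaries leaves a type sequence on $J$ whose consecutive pairs are all pumpable. For such a disjunct, the construction of \Cref{lem:no-decrement}(3) applies. Pump $2N$ copies of an idempotent block at each coordinate in $J$, and fix a single witness for the remaining coordinates; these are added to $S$, contributing only $O(k)$ extra elements. This yields $(2N)^{|J|}$ tuples from $O(N)$ elements. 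Matching this $|J|$ with $d$ follows by tracing the induction: each application of \Cref{lem:decrease-dimension} removes exactly one non-pumpable variable, and \Cref{lem:no-decrement}(2) applied to the final disjunct certifies that no further reduction is possible.
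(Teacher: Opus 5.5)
Your upper bound is correct and is the paper's argument. You take the canonical minimal reparameterization whose image variables always equal domain variables, so $\bar b\in S^d$ and $|I(S,\mathcal M)|\le N n^d$.

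The lower bound has a genuine gap. Your ``cleaner route'' pumps the coordinates in $J$ and fixes a single witness for the remaining coordinates, adding only $O(k)$ elements to $S$. This fails whenever the eliminated coordinates depend on the pumped ones. Take $\varphi(x_1,x_2):=x_1<x_2\wedge\neg\exists z\,(x_1<z<x_2)$. Here $d=1$ and the growth is $\Theta(n)$, so your characterization has to produce $|J|=1$. But once either coordinate is fixed, at most one value of the other satisfies $\varphi$, so no construction that keeps the non-$J$ coordinate fixed yields more than one tuple.

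The claimed identification of $d$ with $\max|J|$ is also unproven. Tracing the induction does not give it. At each step the proof re-decomposes $\exists\bar x\,G(\bar x,\bar y)$, where $G$ is the priority disjunction of \Cref{lem:disj-max}, possibly at a higher rank. So the final all-pumpable disjunct belongs to an iterated projection, not to a contracted type sequence of an original disjunct. Moreover, a sum $\tau_0+\tau_1$ of $q$-types does not express that a split point exists. Producing tuples of $\varphi$ from that final disjunct is exactly the transfer problem you identified, and it remains unsolved.

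The missing idea is how the paper handles that transfer: the witnesses are allowed to move, but only to boundedly many \emph{relative} positions inside isomorphic copies.
\begin{enumerate}
\item Take a minimal \emph{functional} reparameterization $G$ of dimension $d$ and decompose the $d$-variable formula $\exists\bar x\,G(\bar x,\bar y)$.
\item By minimality, using \Cref{lem:decrease-dimension}, \Cref{lem:disj-max} and \Cref{prop:same-min}, some disjunct has all adjacent pairs pumpable.
\item Build $\mathcal M_1$ with $2k+3$ copies of each idempotent block. Put $b_i$ at the start of the middle copy and fix one witness $\bar a$ with $G(\bar a,\bar b)$. On each side of each $b_i$ some copy contains no $a_j$, and these copies act as buffers.
\item In $\mathcal M_n$, which has $n$ extra copies, shift each $b_i$ by $\pi(i)$ copies. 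Carry the coordinates of $\bar a$ along at the same positions within their copies, keeping fixed those that lie in the $L_i$. \Cref{lem:local} then gives $G(\bar a^\pi,\bar c^\pi)$.
\item All tuples $\bar a^\pi$ lie in a set of size $O(n)$: at most $k$ relative positions replicated across all copies, plus the fixed points.
\item Functionality of $G$ makes the $n^d$ tuples $\bar a^\pi$ pairwise distinct.
\end{enumerate}
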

\begin{proof} 

Let $\varphi(\overline{x})$ be an $\MSO$ formula with $k$ free variables, and let $\mathcal{C}$ be an $\MSO$-definable class of labelled linear orders. 
By \Cref{thm:m-dim-reparam}, for any such $\varphi$, we can compute a
minimal $d$-dimensional reparameterization
\[
G(\overline{x}, \overline{y})
\]
of $\varphi$ over the class $\mathcal{C}$.
Moreover, we can assume that $G$ is functional.

\noindent
Note that for a class $\mathcal{C}$ of countable labelled chains definable by a sentence $\psi$, the growth function $g_\varphi(n, \mathcal{C})$ coincides with $g_{\varphi \wedge \psi}(n, \mathit{Countable})$, where $\mathit{Countable}$ denotes the class of all countable labelled chains.

\medskip
\noindent
From now on, we consider only $g_{\varphi}(n) := g_{\varphi}(n, \mathit{Countable})$, the growth function over the class of all countable labelled chains.

\medskip
\noindent\textbf{Upper Bound: $g_\varphi(n) = O(n^d)$} \\
There exists a   minimal reparameterization  $G(\overline{x}, \overline{y})$ which implies  $\bigwedge_i \bigvee_j (y_i = x_j)$.
\begin{equation}\label{eq-prop-repar}
  \mathcal{M} \models G(\overline{a}, \overline{b}) \rightarrow \bigwedge_i \bigvee_j (b_i = a_j).
\end{equation}
\noindent
The growth function $g_\varphi(n)$ is defined as the maximum size of the set
\[
\left\{ \overline{a} \in S^k \mid \mathcal{M} \models \varphi(\overline{a}) \right\}
\]
where $S \subseteq M$ with $|S| \leq n$, and $\mathcal{M}$ ranges over all countable labelled chains.
Due to the reparameterization and \Cref{eq-prop-repar}, this set corresponds to the projection of
\[
\left\{ (\overline{a}, \overline{b}) \in S^{k+d} \mid \mathcal{M} \models G(\overline{a}, \overline{b}) \right\}.
\]
Therefore, the number of satisfying $k$-tuples is bounded by the number of satisfying $(k+d)$-tuples, which is at most $B \cdot n^d$, because for each $\overline{b} \in S^d$, the number of $\overline{a}$ such that $G(\overline{a}, \overline{b})$ holds is bounded by a constant $B$. This establishes the upper bound $O(n^d)$.

\medskip
\input{th-growth-app}

\noindent\textbf{Computability of $d$:} \\
Since the minimal reparameterization of $\varphi$ is computable (by prior results), and its dimension $d$ is minimal by definition, we conclude that $d$ is computable from $\varphi$ and a definition of the class $\mathcal{C}$.
\qed
\end{proof}

\section{Conclusion and Further Results}

Bojańczyk~\cite{Boj23a} studied  reparameterizations of $\MSO$-definable relations
over finite words and proved that it is decidable whether an
$m$-dimensional point $\MSO$ interpretation admits a $d$-dimensional point
$\MSO$ reparameterization.
In this paper, we extend this decidability result to arbitrary countable
labelled chains.

In many situations, however, one considers MSO interpretations that are more general than point interpretations.
In \emph{finite-set} and \emph{set} interpretations, elements of the interpreted
structure are represented by tuples of finite sets or arbitrary sets,
respectively.

\paragraph{Interpretations of \(\mathfrak A\) in \(\mathfrak B\).}
We distinguish three standard flavours of $d$-dimensional $\MSO$ interpretations
of a structure \(\mathfrak A\) in a structure \(\mathfrak B\).
In each case, an element of \(\mathfrak A\) is encoded inside \(\mathfrak B\) by a
\(d\)-tuple:
\begin{description}
  \item[\textbf{Set interpretations} (\(\MSO\), $d$-dimensional):]
  each element of \(\mathfrak A\) is represented by a \(d\)-tuple of arbitrary
  subsets of \(\mathfrak B\).
  \item[\textbf{Finite-set interpretations} (\(\MSO\), $d$-dimensional):]
  each element of \(\mathfrak A\) is represented by a \(d\)-tuple of finite
  subsets of \(\mathfrak B\).
  \item[\textbf{Point interpretations} (\(\MSO\), $d$-dimensional):]
  each element of \(\mathfrak A\) is represented by a \(d\)-tuple of elements of
  \(\mathfrak B\).
\end{description}
A natural question is whether a given interpretation can be simplified; for
instance, whether a set or finite-set interpretation is equivalent to a point
interpretation.

Gallot, Lhote, and Nguyen~\cite{GallotLhoteNguyen:PolyGrowth:2025} studied
\emph{finite-set-to-point} reparameterizations over finite words and trees and proved
that it is decidable whether a finite-set interpretation admits a point
reparameterization.

In~\cite{Rabinovich:ICALP26} we extended this decidability result to arbitrary countable labelled chains:
given an $\MSO$ formula \(\varphi\) with free finite-set variables, it is decidable
whether \(\varphi\) admits a point reparameterization.

Over linear orders, a canonical class of simple subsets is given by
\emph{cuts}, i.e., downward-closed sets.
Cuts correspond to points in the Dedekind completion and form a
structurally robust family of subsets.
The analogue of our main result, \Cref{thm:m-dim-reparam}, also holds when
first-order domain and image variables are replaced by variables
interpreted as cuts.

Finally, building on the techniques of~\cite{BKR11}, we also proved
in~\cite{Rabinovich:ICALP26} that it is decidable whether an $\MSO$
formula \(\varphi\) with free set variables admits a
many-dimensional finite-set reparameterization over Dedekind-complete
countable labelled chains.

\bibliographystyle{plain}
\bibliography{sample} 

\end{document}

%% file: th-growth-app.tex
\medskip
\noindent\textbf{Lower Bound: \(g_\varphi(n)=\Omega(n^d)\).}

Let 
$
\exists \bar x\,G(\bar x,\bar y)
$ be a minimal functional reparameterization of $\vp$.

It is equivalent to a finite disjunction of formulas \(D_j(\bar y)\),
each of the form
\begin{equation}\label{eq-vp4}
\type^q((-\infty,y_1))=\tau_0
\;\wedge\;
\bigwedge_{i=1}^{d-1}
\type^q([y_i,y_{i+1}))=\tau_i
\;\wedge\;
\type^q([y_d,\infty))=\tau_d .
\end{equation}
for suitable \(q\)-types \(\tau_0,\dots,\tau_d\).

We claim that for some disjunct \(D_j\), every adjacent pair
\((\tau_i,\tau_{i+1})\) is pumpable. Indeed, if every disjunct contained
a non-pumpable adjacent pair, then by
\Cref{lem:decrease-dimension} each \(D_j\) would admit a
reparameterization of dimension strictly smaller than \(d\). Hence
\(\exists \bar x\,G(\bar x,\bar y)\) would also admit a
reparameterization of dimension smaller than \(d\), and by
\Cref{prop:same-min} the same would hold for \(\varphi\), contradicting
the minimality of \(d\).

Fix such a disjunct \(D:=D_j\). We construct, for every \(n\), a
countable labelled chain \(\mathcal M_n\) and a set \(S\subseteq M_n\)
of size \(O(n)\) such that \(\varphi\) has \(\Omega(n^d)\) satisfying
tuples in \(S^k\).

For each \(i\), let \(\tau'_i\) be an idempotent \(q\)-type witnessing
that \((\tau_i,\tau_{i+1})\) is pumpable, that is,
\[
\tau_i+\tau'_i=\tau_i
\qquad\text{and}\qquad
\tau'_i+\tau_{i+1}=\tau_{i+1}.
\]
Let \(U_i\) be a labelled chain of type \(\tau'_i\), and let \(L_i\) be
a labelled chain of type \(\tau_i\). Note that \(U_i\) has a minimal
element, which we choose as the position of \(y_i\).

Set \(m:=2k+3\) and let \(r:=(m+1)/2\). Define
\[
\mathcal M_1 :=
L_0+\sum_{i=1}^{d}
\bigl(U_i\times m+L_i\bigr),
\]
where \(U_i\times m\) denotes the concatenation of \(m\) copies of
\(U_i\).

Let \(b_i\) be the minimal element of the \(r\)-th copy of \(U_i\).
By the choice of the types and by idempotency,
\[
\mathcal M_1\models D(b_1,\dots,b_d).
\]
Since \(D(\bar y)\) implies \(\exists \bar x\,G(\bar x,\bar y)\), choose
\(\bar a=(a_1,\dots,a_k)\) such that
\[
\mathcal M_1\models G(\bar a,\bar b).
\]
For each \(i\), among the \(r-1=k+1\) copies of \(U_i\) to the left of
the \(r\)-th copy, choose one copy containing none of the elements
\(a_1,\dots,a_k\). Similarly, among the \(m-r=k+1\) copies of \(U_i\) to
the right of the \(r\)-th copy, choose one copy containing none of these
elements. Such copies exist because there are only \(k\) elements
\(a_1,\dots,a_k\). We use these two empty copies as buffers.

For each \(i\), define \(C_i\subseteq U_i\) to be the finite set of
positions in \(U_i\) occupied by one of the elements \(a_1,\dots,a_k\)
inside some copy of \(U_i\) in \(\mathcal M_1\):
\[
\begin{aligned}
C_i := \{\, c \mid\;&
\text{for some copy of } U_i \text{ in } \mathcal M_1,\\
&
\text{one of } a_1,\dots,a_k
\text{ occurs at position } c \text{ in that copy}
\,\}.
\end{aligned}
\]
Also set
\[
S_i:=L_i\cap\{a_1,\dots,a_k\}.
\]
For \(n\ge1\), define
\[
\mathcal M_n :=
L_0+\sum_{i=1}^{d}
\bigl(U_i\times(m+n)+L_i\bigr).
\]
Let
\[
S :=
\bigcup_i S_i
\;\cup\;
\bigcup_i \bigl(C_i\times\{1,\dots,m+n\}\bigr).
\]
Since each \(C_i\) and each \(S_i\) has bounded size independently of
\(n\), we have \(|S|=O(n)\).

Now let
\[
\pi:\{1,\dots,d\}\to\{1,\dots,n\}
\]
be arbitrary. For each \(i\), let \(c_i^\pi\) be the minimal element of
the \((r+\pi(i))\)-th copy of \(U_i\) in \(\mathcal M_n\), and put
\[
\bar c^\pi=(c_1^\pi,\dots,c_d^\pi).
\]
By idempotency of the types \(\tau'_i\), the \(q\)-types of the
intervals determined by \(\bar c^\pi\) are the same as the
corresponding \(q\)-types of the intervals determined by \(\bar b\).
Hence
\[
\mathcal M_n\models D(\bar c^\pi),
\]
and therefore
\[
\mathcal M_n\models \exists \bar x\,G(\bar x,\bar c^\pi).
\]
It remains to choose the witnesses \(\bar x\) inside \(S^k\). Starting
from the tuple \(\bar a\) in \(\mathcal M_1\), transport each coordinate
lying in a copy of some \(U_i\) to the corresponding position in the
appropriate copy of \(U_i\) in \(\mathcal M_n\), relative to the shifted
point \(c_i^\pi\). Coordinates lying in some \(L_i\) are kept fixed.
The empty copies chosen on the left and on the right of the original
\(r\)-th copy serve as buffers, ensuring that this transport preserves
the \(q\)-types of all intervals determined by
\((\bar a,\bar b)\).

Thus we obtain a tuple \(\bar a^\pi\in S^k\) such that the relevant
intervals determined by \((\bar a^\pi,\bar c^\pi)\) in
\(\mathcal M_n\) have the same \(q\)-types as the corresponding
intervals determined by \((\bar a,\bar b)\) in \(\mathcal M_1\). By the
locality of $\MSO$ formulas on chains (\Cref{lem:local}), it follows that
\[
\mathcal M_n\models G(\bar a^\pi,\bar c^\pi).
\]
In particular,
\[
\mathcal M_n\models \varphi(\bar a^\pi).
\]
Different functions \(\pi\) give different tuples \(\bar c^\pi\).
Since \(G\) is functional, the corresponding tuples \(\bar a^\pi\) are
also distinct. Thus \(\varphi\) has at least \(n^d\) satisfying tuples
in \(S^k\). Since \(|S|=O(n)\), this yields
\[
g_\varphi(n)=\Omega(n^d).
\]